%
%
%

\documentclass[reqno,a4paper,10pt]{amsart}




\usepackage{graphicx}         
\usepackage{amsmath}
\usepackage{amsfonts}
\usepackage{amssymb}
\usepackage{eucal}
\usepackage[latin1]{inputenc}
\usepackage[all]{xy}


\vfuzz2pt 
\hfuzz2pt 



\newtheorem{thm}{Theorem}[section]
\newtheorem{cor}[thm]{Corollary}
\newtheorem{lemma}[thm]{Lemma}

\newtheorem{prop}[thm]{Proposition}
\theoremstyle{definition}
\newtheorem{defn}[thm]{Definition}
\theoremstyle{remark}

\newtheorem{example}[thm]{Example}


\renewcommand{\l}{\lambda}
\newcommand{\s}{\sigma}


\newcommand{\PB}{\left\{\cdot\,,\cdot\right\}}

\newcommand{\Pb}[1]{\left\{\cdot\,,#1\right\}}
\newcommand{\pb}[1]{\left\{#1\right\}}

\renewcommand{\)}{\right)}
\renewcommand{\[}{\left[}
\renewcommand{\]}{\right]}

\newcommand{\set}[1]{\left\{#1\right\}}


\newcommand{\cK}{\mathcal K}

\newcommand{\X}{\mathcal X}



\newcommand{\bbR}{\mathbb R}


\newcommand{\F}{\mathbf F}



\newcommand{\hcS}[2]{\hat{\mathcal S}^{(#1)}_{#2}}
\newcommand{\cS}[2]{\mathcal S^{(#1)}_{#2}}
\newcommand{\um}{{\underline m}}
\newcommand{\hum}{\hat{\underline m}}

\newcommand{\xx}[1]{x_{#1}}
\newcommand{\pin}[2]{\pi_{#2}}
\newcommand{\eps}[2]{\epsilon^{#1}_{#2}}
\newcommand{\LV}{\hbox{LV}}

\newcommand{\ds}{\displaystyle}

\newcommand{\leqs}{\leqslant}
\newcommand{\geqs}{\geqslant}
\newcommand{\pp}[2]{\frac{\partial#1}{\partial#2}}
\newcommand{\p}{\partial}

\newcommand{\diff}{{\rm d }}

\newcommand{\Id}{\mathop{\rm Id}}

\renewcommand{\geq}{\geqs}
\renewcommand{\leq}{\leqs}




\newif\ifprivate
\privatefalse

 \numberwithin{equation}{section}

\def\???{\ifprivate {\bf {???}} \marginpar{{\Huge {\bf ?}}}\else \fi}
\numberwithin{equation}{section}

\begin{document}

\nocite{*}

\parskip 4pt
\baselineskip 16pt


\title[Integrable reductions of some Lotka-Volterra systems]{Integrable reductions of the Bogoyavlenskij-Itoh Lotka-Volterra systems}

\author{P. A. Damianou, C. A. Evripidou, P. Kassotakis}
\address{Department of Mathematics and Statistics\\
University of Cyprus\\
P.O.~Box 20537, 1678 Nicosia\\Cyprus}
\email{damianou@ucy.ac.cy, cevrip02@ucy.ac.cy, pavlos1978@gmail.com}
\author{P. Vanhaecke}
\address{ Pol Vanhaecke,  Laboratoire de Math\'ematiques\\
          UMR 7348 du CNRS\\
          Universit\'e de Poitiers\\
          86962 Futuroscope Chasseneuil Cedex\\
          France}
\email{ pol.vanhaecke@math.univ-poitiers.fr}

\thanks{Corresponding author: Pantelis  A. Damianou,  Email: damianou@ucy.ac.cy}
\date{\today}

\begin{abstract}
	Given a constant skew-symmetric matrix A, it is a difficult open problem
	whether the associated Lotka-Volterra system is integrable or not. We solve
	this problem in the special case  when A is a Toepliz matrix where all
	off-diagonal entries are plus or minus one. In this case, the associated
	Lotka-Volterra system turns out to be a reduction of   Liouville integrable
	systems, whose integrability was shown by Bogoyavlenskij and Itoh. We prove
	that the reduced systems are also Liouville integrable and that they are
	also non-commutative integrable by constructing a set of independent first
	integrals, having the required involutive properties (with respect to the
	Poisson bracket).  These first integrals fall into two categories.  One set
	consists of polynomial functions which can be obtained by a matricial
	reformulation of Itoh's combinatorial description. The other set consists
	of rational functions which are obtained through a Poisson map from the
	first integrals of some recently discovered superintegrable Lotka-Volterra
	systems. The fact that these polynomial and rational first integrals,
	combined, have the required properties for Liouville and non-commutative
	integrability is quite remarkable; the quite technical proof of functional
	independence of the first integrals is given in detail.
\end{abstract}
\subjclass[2010]{37J35, 39A22}

\keywords{Integrable systems, reduction, discretization}

\maketitle

\tableofcontents

\section{Introduction} \label{intro}
The Lotka-Volterra model is a basic model of predator-prey interactions. The model was developed independently by
A. Lotka \cite{Lotka}, and V. Volterra \cite{Volterra}.  It forms the basis for many models used today in the
analysis of population dynamics.

The most general form of Lotka-Volterra equations in dimension $n$ is
\begin{equation}\label{eq:LV_gen_intro}
  \dot x_i = \varepsilon_i x_i + \sum_{j=1}^n A_{i,j} x_i x_j, \ \ i=1,2, \dots , n \; .
\end{equation}
By now, many systems of the form (\ref{eq:LV_gen_intro}) have been introduced and studied,
often from the point of (Liouville, Darboux  or algebraic) integrability
\cite{Bog1,Bog2,bermejo,veselov,suris,fernan_pol,koul_quispel_pol_2016,const_damianou,bountis_pol} or Lie theory
\cite{Bog1,Bog2,damianou,ballesteros,char_damian_evrip}, but also in relation
with other integrable systems \cite{KKQTV,damian_fernan}.

For the systems which will be considered here, all constants $\varepsilon_i$ are zero (no linear terms) and the
constant matrix $A$ is skew-symmetric. It is well-known that (\ref{eq:LV_gen_intro}) is then a Hamiltonian system
with Poisson structure defined by
\begin{equation}\label{eq:poisson_intro}
  \pb{x_i,x_j}:=A_{i,j}x_ix_j\;,
\end{equation}
and Hamiltonian function $H:=x_1+x_2+\dots+x_n$. We will, more precisely, only be concerned in this paper with the
$n$ skew-symmetric matrices $A_0,\dots,A_{n-1}$ of the Toeplitz\footnote{Recall that a Toeplitz matrix is a matrix
  in which each descending diagonal from left to right is constant; when such a matrix is skew-symmetric, it is
  entirely determined by its first row.} form
\begin{equation}\label{eq:bogo_mat}
  A_k=
  \begin{pmatrix}
    0&1&1&\cdots&1&-1&-1&\cdots&-1&-1\\
   -1&0&1&\cdots&1&1&-1&\cdots&-1&-1\\
   -1&-1&0&\cdots&1&1&1&\ddots&-1&-1\\
   \vdots&\vdots& & \ddots & \vdots & \vdots & \vdots & \ddots & \ddots & \vdots \\
   -1&-1&-1&\cdots&\cdots&\cdots&\cdots&\cdots&1&-1\\
    1&-1&-1&\cdots&\cdots&\cdots&\cdots&\cdots&1&1\\
    \vdots&\vdots& & \ddots & \vdots & \vdots & \vdots & \ddots & \vdots & \vdots  \\
    1&1&1&\cdots&-1&-1&-1&\cdots&0&1\\
    1&1&1&\cdots&1&-1&-1&\cdots&1&0\\
  \end{pmatrix},
\end{equation}
with $-1$ appearing $k$ times on the first row. The size of the matrix $A_k$ is $n$, which we sometimes indicate
explicitly by writing $A^{(n)}_k$ for $A_k$. Also, the Poisson structure which corresponds to $A_k$, as in
(\ref{eq:poisson_intro}), is denoted by $\pi_k$ or $\pi^{(n)}_k$. The corresponding Lotka-Volterra system
(\ref{eq:LV_gen_intro}) will be denoted by $\LV(n,k)$.

Two families of Lotka-Volterra systems $\LV(n,k)$ have already been studied from the point of view of
integrability. The first one, which we will refer to as the Bogoyavlenskij-Itoh case, is when $n=2k+1$. Notice that
$A_k$ is then a circulant\footnote{A circulant matrix of size $n$ is a Toeplitz matrix $A$ satisfying the
  additional property that $A_{i,n}=A_{i+1,1}$ for $i=1,\dots n-1$, so that each row is obtained from the previous
  row by rotating it by one element to the right.} matrix and the system has a symmetry of order $n$, given by
permuting the variables in a cyclic way. In \cite{itoh1}, Y. Itoh gives explicit combinatorial formulas for $k+1$
independent first integrals $K_0,K_1,\dots,K_k$ of $\LV(2k+1,k)$, where $K_i$ is a homogeneous polynomial of degree
$2i+1$; in particular, $K_0$ is the linear Hamiltonian $H$. An alternative construction of these first integrals
was given in \cite{Bog1} by O. Bogoyavlenskij, who obtains them as spectral invariants of a Lax operator which he
constructs. Next, Y. Itoh shows in \cite{itoh2} by a beautiful combinatorial argument that the integrals
$K_0,K_1,\dots,K_k$ are pairwise in involution (Poisson commute). Since the rank of the Poisson structure
$\pi_k^{(2k+1)}$ is $2k$, this shows that $\LV(2k+1,k)$ is integrable in the sense of Liouville, for all $k$.

More recently, another family of Lotka-Volterra systems came up in the study of some polynomials (so-called
multi-sums of products) which appear as invariants of a discretization of some integrable equations, such as the
modified Korteweg-de Vries equation. This family consists of all $\LV(n,0)$, i.e., they correspond to the matrix
$A_0$, whose upper-triangular entries are all equal to $1$. It was shown in \cite{KKQTV} that these systems have
$\left[\frac{n+1}2\right]$ independent first integrals which are pairwise in involution. Again, this shows that
$\LV(n,0)$ is integrable in the sense of Liouville, since the rank of the Poisson structure $\pi_0^{(n)}$ is $n$
when $n$ is even, and $n-1$ otherwise. In addition, it is shown in~\cite{KKQTV} that $\LV(n,0)$ is also
superintegrable, i.e., it has $n-1$ independent (rational) first integrals. This alternative viewpoint of the
integrability of these systems exhibits the integral curves of the Hamiltonian vector field (\ref{eq:LV_gen_intro})
as being confined to tori which are of lower dimension than what is expected from Liouville integrability. This
property has important implications to the dynamics of the Hamiltonian system.

The starting point of the present paper is the observation that $\LV(n,0)$ is a reduction of the
Bogoyavlenskij-Itoh system $\LV(2n-1,n-1)$: setting the last $n-1$ variables of the latter system equal to zero, we
get a Poisson submanifold linearly isomorphic to $\bbR^n$, the restricted Poisson structure is $\pi_0$ and the
Hamiltonian of $\LV(2n-1,n-1)$, restricted to the submanifold, is precisely the Hamiltonian of $\LV(n,0)$. This
does not mean that the Liouville integrability of $\LV(n,0)$ is a consequence of the Liouville integrability of
$\LV(2n-1,n-1)$; on the contrary, except for the Hamiltonian $H=K_0$ each one of the first integrals $K_i$ becomes
trivial (zero) under the reduction; in particular, the rational integrals of $\LV(n,0)$ cannot be obtained from the
polynomial first integrals of $\LV(2n-1,n-1)$. The natural question which arizes is the integrability of the
systems that interpolate between $\LV(2n-1,n-1)$ and $\LV(n,0)$. In fact, it is easy to see that starting from
$\LV(2n-1,n-1)$ and setting successively the last surviving variable equal to zero, one gets the following string
of Lotka-Volterra systems:
\begin{equation*}
  \LV(2n-1,n-1)\to \LV(2n-2,n-2)\to\cdots\to\LV(n+1,1)\to \LV(n,0)\;,
\end{equation*}%
with corresponding Poisson structures $\pi_{n-1},\ \pi_{n-2},\dots,\pi_1,\pi_0$ (in the appropriate dimensions). In
each step, precisely one of the polynomial first integrals becomes trivial (namely, the one of highest degree), yet
we will show that these Lotka-Volterra systems are Liouville integrable by constructing, at each step, a sufficient
number of independent rational first integrals, which are themselves pairwise in involution, but are also in
involution with the (restricted) polynomial first integrals. But what happens with superintegrability?
Non-commutative integrability, which interpolates between Liouville integrability and superintegrability is the
answer! Quickly stated (see Definition~\ref{def:non-com}
below for a precise formulation and also \cite{camil_miranta_pol,PLV}), a Hamiltonian system on
an $n$-dimensional Poisson manifold is a non-commutative integrable system of rank $r$ if it has $n-r$ independent
first integrals, $r$ of which are in involution with all $n-r$ first integrals (so the Hamiltonian is among
them). Clearly, superintegrability corresponds to $r=1$; also, Liouville integrability correspond to the case in
which $r$ is half the rank of the Poisson manifold (\emph{all} $n-r$ first integrals are then pairwise in
involution).

We can now state the main theorem of this paper. Fix $n$ and $k$ with $n>2k+1$. For $i=0,1,\dots,k$ let
$K_i^{(n,k)}$ denote the restriction of the polynomial first integral $K_i$ of $\LV(2n-2k-1,n-k-1)$ to $\LV(n,k)$.
Also, for $\ell=1,\dots,n-2k-2$ denote by $H^{(n,k)}_\ell$ the $n-2k-2$ rational\footnote{The pullback of the
  Hamiltonian $H$ of $\LV(n-2k,0)$ is excluded from this list because it is equal to $K_k^{(n,k)}$.} first
integrals of $\LV(n-2k,0)$, pulled back to $\LV(n,k)$ (using the Poisson map in Proposition
\ref{prop:poisson_map}).
\begin{thm}\label{thm:main}
  Consider the Lotka-Volterra system $\LV(n,k)$, where $n>2k+1$.
  \begin{enumerate}
  \item[(1)] It is non-commutative integrable of rank $k+1$, with first integrals
    \begin{equation}\label{eq:first_int_intro}
      H=K_0^{(n,k)},K_1^{(n,k)}\dots,K_k^{(n,k)},H^{(n,k)}_1,H^{(n,k)}_2,\dots, H^{(n,k)}_{n-2k-2}\;.
    \end{equation}
    The first $k+1$ functions of this list have independent Hamiltonian vector fields and are in involution with
    every function of the complete list (\ref{eq:first_int_intro}).
  \item[(2)] It is Liouville  integrable with first integrals
    $$H=K_0^{(n,k)},K_1^{(n,k)}\dots,K_k^{(n,k)},H^{(n,k)}_1,H^{(n,k)}_2,
    \dots,H^{(n,k)}_{r-1}\;,$$ where $r:=\left[\frac{n+1}2\right]-k.$
  \end{enumerate}

\end{thm}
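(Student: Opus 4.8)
The plan is to assemble the two families of first integrals—the restricted polynomial integrals $K_i^{(n,k)}$ and the pulled-back rational integrals $H^{(n,k)}_\ell$—and verify three things: (a) the required Poisson-commutativity relations among them, (b) the functional independence of the combined list, and (c) the counting that identifies the rank with $k+1$ and recovers Liouville integrability by truncating the list. For (a), I would treat the polynomial integrals first. Since the $K_i$ Poisson commute on $\LV(2n-2k-1,n-k-1)$ by Itoh's theorem (\cite{itoh2}), and $\LV(n,k)$ sits inside it as a Poisson submanifold obtained by setting the last batch of variables to zero, restriction is a Poisson map and hence the $K_i^{(n,k)}$ still pairwise commute with respect to $\pi_k$. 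Next, the $\set{H^{(n,k)}_\ell}$ are the images under the Poisson map of Proposition \ref{prop:poisson_map} of the rational superintegrals of $\LV(n-2k,0)$; the commutativity relations we need among the $H^{(n,k)}_\ell$ are therefore inherited from the corresponding relations in $\LV(n-2k,0)$ established in \cite{KKQTV}. The genuinely new commutativity statement is the \emph{cross} relations: that each $K_i^{(n,k)}$ Poisson commutes with each $H^{(n,k)}_\ell$. Here the key structural fact is that the restricted $K_i^{(n,k)}$ is (up to relabelling) a polynomial integral of the smaller Bogoyavlenskij–Itoh-type system $\LV(n-2k,0)$ pulled forward appropriately, so under the Poisson map these cross-brackets also reduce to brackets computed inside $\LV(n-2k,0)$; since $K_k^{(n,k)}$ is the Hamiltonian $H$ of $\LV(n-2k,0)$ and the rational integrals are integrals of that Hamiltonian, these brackets vanish. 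I expect this reduction-of-brackets argument to require a careful bookkeeping lemma relating the combinatorial description of $K_i$ on the big system to its restriction, and this is where most of the work sits.

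For (b), functional independence, the clean strategy is to exhibit a single point (or a Zariski-open set of points) of $\bbR^n$ at which the differentials of all $n-r$ functions in list \eqref{eq:first_int_intro} are linearly independent; since the functions are polynomial or rational, independence at one point forces generic independence. I would compute the Jacobian at a conveniently chosen point—likely one with all coordinates equal, or a point tailored to decouple the two blocks—and show the Jacobian matrix is block-triangular: the $K_i^{(n,k)}$-block governed by the already-known independence of Itoh's integrals (on the relevant coordinate subspace), and the $H^{(n,k)}_\ell$-block governed by the known independence of the superintegrals of $\LV(n-2k,0)$, pulled back through the Poisson map (whose differential must be shown to be injective, or at least injective on the relevant subspace, at the chosen point). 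The paper's abstract flags this as "quite technical," so I anticipate the honest difficulty here is that the two blocks do \emph{not} separate cleanly and one must produce an explicit witness point and grind through a determinant or rank computation; that determinant evaluation is the main obstacle.

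For (c), once (a) and (b) are in hand the theorem is essentially combinatorial. The total number of first integrals in \eqref{eq:first_int_intro} is $(k+1)+(n-2k-2)=n-k-1$, so for non-commutative integrability of rank $r=k+1$ I must check $n-(k+1)=n-k-1$ independent integrals—matching—and that exactly $k+1$ of them (namely $K_0^{(n,k)},\dots,K_k^{(n,k)}$) commute with the whole list, which is precisely statement (a). For part (2), the rank of $\pi_k^{(n)}$ must be identified; assuming the earlier sections pin it down as $2\bigl(\left[\tfrac{n+1}2\right]-k\bigr)=2r$ with $r=\left[\tfrac{n+1}2\right]-k$, Liouville integrability follows by keeping $K_0^{(n,k)},\dots,K_k^{(n,k)}$ together with the first $r-1-k$ of the $H^{(n,k)}_\ell$, so that all $r$ retained functions pairwise commute (the $K$'s among themselves by Itoh, the retained $H$'s among themselves by \cite{KKQTV}, and cross by (a)) and are independent by the restriction of the argument in (b). The only subtlety is checking $r-1\leq n-2k-2$, i.e. that enough rational integrals are available, which is immediate from $n>2k+1$.
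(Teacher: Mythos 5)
Your overall architecture (restricted polynomial integrals, pulled-back rational integrals, involution checks, a Jacobian computed at a witness point, then counting) is the same as the paper's in outline, but two of your key mechanisms fail and the final counting is wrong. First, the cross-involution $\pb{K_i^{(n,k)},H^{(n,k)}_\ell}=0$ cannot be obtained by pushing the bracket down to $\LV(n-2k,0)$ through $\phi_k$: a function is of the form $\phi_k^*g$ only if it is constant on the fibres of $\phi_k$, i.e.\ a function of the products $x_1\cdots x_k\,x_j\,x_{n-k+1}\cdots x_n$, and for $i<k$ the integrals $K_i^{(n,k)}$ are not of this form (already $K_0^{(n,k)}=x_1+\dots+x_n$ is not); only $K_k^{(n,k)}=\phi_k^*(y_1+\dots+y_{n-2k})$ is. So your reduction argument covers at best the single case $i=k$, while the theorem needs all $i=0,\dots,k$. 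The paper instead proves the cross-involution by a direct computation with the diagonal bracket: by Corollary \ref{cor:S}, every monomial of $K_i^{(n,k)}$ is of the form $(L+L')Y$ with $L=x_{k+1}+\dots+x_{n-k}$ and $L',Y$ free of the middle variables, and Lemma \ref{lma:xs} then makes each bracket with $H^{(n,k)}_\ell$ vanish. Likewise, the independence is not \emph{governed by the already-known independence of Itoh's integrals}: what must be proved is that the $k\times k$ block of derivatives of the $K_i^{(n,k)}$ (shifted by suitable multiples of $H$) with respect to $x_1,\dots,x_k$ at ${\bf 1}$ has full rank, and this does not follow from independence of the $K_i$ on $\bbR^{2k+1}$; the paper needs the whole combinatorial apparatus of Proposition \ref{prp:indep_tech} (the counts $\s^{(k)}_{i,j}$, their recurrences in Lemma \ref{lem:recur_rel}, and the binomial evaluation of $\s^{(k)}_{i,j}-\s^{(k)}_{i,j+1}$) to get the monotonicity pattern that forces maximal rank. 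Your ``grind through a determinant'' is precisely the missing content, not a detail.

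Second, your part (c) is quantitatively incorrect. By Proposition \ref{prop:pi_rank} the rank of $\pi^{(n)}_k$ is $n$ for $n$ even and $n-1$ for $n$ odd, not $2\bigl(\left[\frac{n+1}2\right]-k\bigr)$; hence Liouville integrability on $(\bbR^n,\pi_k)$ requires $n-\frac12\Rk\pi_k=\left[\frac{n+1}2\right]$ independent pairwise commuting integrals, which is exactly what the theorem's list $K_0^{(n,k)},\dots,K_k^{(n,k)},H^{(n,k)}_1,\dots,H^{(n,k)}_{r-1}$ provides. Your prescription of retaining only $r-1-k$ of the rational integrals keeps too few (and that number is even negative when $n-2k$ is small), so it does not prove statement (2) as formulated. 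Finally, for non-commutative integrability of rank $k+1$ you must also verify the second condition of Definition \ref{def:non-com}, namely that the Hamiltonian vector fields $\X_{K_0^{(n,k)}},\dots,\X_{K_k^{(n,k)}}$ are generically independent; this is not mere counting. It is immediate when $n$ is even (the structure is then symplectic), but when $n$ is odd one must use that $H^{(n,k)}_1$ is a Casimir and that it is functionally independent of the $K_i^{(n,k)}$, as the paper does; your proposal omits this step.
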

%

\section{Diagonal Poisson structures and Poisson maps}
We first introduce the Poisson structures which appear in the Lotka-Volterra systems which we will construct in the
next section. For any $k$ with $0\leqslant k<n$ we define a skew-symmetric Toeplitz matrix $A_k^{(n)}$ of size $n$
by setting, for $1\leqs i<j\leqs n$,
\begin{equation}\label{eq:A_k_def_2}
  \(A_k^{(n)}\)_{i,j}:=\eps{n+i}{k+j}\;\quad\text{where}\quad\eps m\ell:=
  \begin{cases}
     \;\;1 \qquad    m>\ell\;,\\
    -1 \qquad    m\leqs\ell\;.
  \end{cases}
\end{equation}%
It is fully determined by its first row, which is given by $(0,1,1,\dots,1, -1,-1,\dots,-1)$, with $-1$ appearing
$k$ times (at the end). When its size is clear from the context, we also write $A_k$ for $A_k^{(n)}$, and similarly
for the entries of this matrix. Using $A_k$ we consider the quadratic Poisson structure $\pi^{(n)}_k=\pi_k$ on
$\bbR^n$, defined by the following brackets:
\begin{equation}\label{eq:pik_def_2}
  \pb{\xx i,\xx j}_k=\(A_k\)_{i,j}x_ix_j=\eps{n+i}{k+j}x_ix_j\;.
\end{equation}%
These quadratic Poisson structures are called \emph{diagonal}, because the Poisson bracket of $x_i$ and $x_j$ is a
scalar multiple of their product $x_ix_j$; it is well-known that diagonal brackets always satisfy the Jacobi
identity, hence they are indeed Poisson brackets. The rank of the Poisson
structures $\pi_k$ is given by the following elementary proposition.
\begin{prop}\label{prop:pi_rank}
  The rank of $\pi_k=\pi^{(n)}_k$ is $n$ when $n$ is even and $n-1$ when $n$ is odd. In the latter case,
  \begin{equation}\label{eq:Casimir}
    C:= x_1x_2\dots x_k\ds\frac{x_{k+1}x_{k+3}\ldots x_{n-k}}{x_{k+2}x_{k+4}\ldots x_{n-k-1}}x_{n-k+1}\dots
    x_{n-1}x_n\;
  \end{equation}%
  is a Casimir function of $\pi_k$.
\end{prop}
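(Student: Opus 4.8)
The plan is to compute the rank of the skew-symmetric matrix $A_k=A_k^{(n)}$ directly, since the rank of the diagonal Poisson structure $\pi_k$ at a generic point of $\bbR^n$ (all $x_i\neq 0$) equals the rank of $A_k$; indeed the Poisson matrix is $D A_k D$ where $D=\diag(x_1,\dots,x_n)$ is invertible, so the two ranks coincide. A skew-symmetric matrix has even rank, so it suffices to show $\det A_k\neq 0$ when $n$ is even, and that $A_k$ has a one-dimensional kernel when $n$ is odd, the kernel being spanned by a vector whose entries are the exponents appearing in $C$.

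First I would treat the odd case by exhibiting the kernel vector explicitly. Writing $n=2m+1$, consider the vector $v$ with $v_i=1$ for $i\leqs k$, then alternating signs $+,-,+,\dots$ on the block $i=k+1,\dots,n-k$, and $v_i=1$ again for $i\geqs n-k+1$; this is exactly the vector of exponents of the monomials in $C$ (the variables $x_{k+2},x_{k+4},\dots$ in the denominator carrying exponent $-1$). One then checks $A_k v=0$ by a direct computation using the closed form $(A_k)_{i,j}=\eps{n+i}{k+j}$: for each fixed row index $i$, the sum $\sum_j (A_k)_{i,j} v_j$ telescopes because consecutive entries of the row are equal except at the two places where the sign pattern $\eps m\ell$ switches, and those switches are matched against the sign alternation built into $v$. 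Equivalently, and perhaps more transparently, since $\LV(n,k)$ is a Hamiltonian system with Hamiltonian $H=\sum x_i$ and it is classical that multiplicative monomials $\prod x_i^{v_i}$ Poisson-commute with everything iff $A_k v=0$, one may simply verify that $\{C,x_j\}_k=0$ for all $j$ from the Leibniz rule, which reduces to the same telescoping identity $\sum_i v_i (A_k)_{i,j}=0$. This shows $\Rk A_k\leqs n-1$, hence $\Rk A_k=n-1$ by parity, and that $C$ is a Casimir.

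For the even case $n=2m$ I would show $\det A_k\neq 0$, equivalently that $A_k$ has trivial kernel. The cleanest route is to reduce to a single representative: conjugating by a signed permutation (or using the fact, implicit in the introduction, that $A_k$ is obtained from $A_0$ by a relabeling when one tracks the string of reductions) one can relate $\det A_k$ to $\det A_0$, where $A_0$ is the Toeplitz matrix with all superdiagonal entries $+1$. For $A_0$ the kernel equation $A_0 v=0$ reads, after subtracting consecutive rows, $v_{i}+v_{i+1}=0$ together with a boundary condition forcing $v_1=v_2=\dots$, which has only the zero solution when $n$ is even (the alternating candidate $v=(1,-1,1,-1,\dots)$ fails the last equation precisely when $n$ is even). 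Alternatively one can compute $\det A_k$ by row reduction directly on the Toeplitz form \eqref{eq:bogo_mat}: subtracting each row from the next produces a sparse bidiagonal-type matrix whose determinant is immediate. The one genuine subtlety — and the step I expect to require the most care — is bookkeeping the sign-switch positions of $\eps{n+i}{k+j}$ as $i$ ranges over all rows (the "kink" in the Toeplitz pattern wraps differently for small and large $i$), so that the telescoping/row-reduction argument is uniformly valid; once the two switch locations in each row are pinned down as functions of $i$ and $k$, both the odd-case kernel computation and the even-case nonvanishing of the determinant follow by the same elementary manipulation.
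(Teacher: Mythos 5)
Your overall strategy --- reducing to the rank of $A_k$ at a generic point, exhibiting the alternating exponent vector of $C$ as a null vector of $A_k$ for odd $n$, and showing $\det A_k\neq0$ for even $n$ --- is the same as the paper's, and your treatment of the Casimir (checking $A_kv=0$, equivalently $\pb{C,x_j}_k=0$ via the Leibniz rule) is exactly the paper's second argument. But there are two genuine gaps. The first is in the odd case: from ``$\Rk A_k\leqs n-1$'' you conclude ``$\Rk A_k=n-1$ by parity,'' which is a non sequitur. Parity of the rank of a skew-symmetric matrix only says the rank is even; since $n-1$ is itself even, the rank could a priori be $n-3$, $n-5$, etc. Exhibiting one nonzero kernel vector gives an upper bound on the rank but no lower bound. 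The paper supplies the lower bound by noting that the top-left principal $(n-1)\times(n-1)$ submatrix of $A_k^{(n)}$ is again of the same Toeplitz form (it equals $A_{k-1}^{(n-1)}$ for $k\geqs1$, and $A_0^{(n-1)}$ for $k=0$) and is nonsingular by the even case; you need some such step.

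The second gap is in the even case. Your preferred route --- relating $A_k$ to $A_0$ by a signed permutation, ``implicit in the introduction'' --- is not available: the reductions described there pass from $A_k^{(n+1)}$ to $A_{k-1}^{(n)}$ by deleting a row and column, i.e., they change the dimension; they do not relabel variables within a fixed dimension, and $A_k^{(n)}$ and $A_0^{(n)}$ are not conjugate by a signed permutation in any evident way, so nothing transfers from your (correct) kernel analysis of $A_0$. Your fallback, row reduction on the Toeplitz form, is the right idea but over $\bbZ$ it is not ``immediate'': subtracting consecutive rows produces entries in $\{0,\pm1,\pm2\}$ (the sign switch of $\eps{n+i}{k+j}$ contributes a $\pm2$, and the zero diagonal breaks the pattern near $j=i$). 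The paper's device is to work modulo $2$: adding row $i+1$ to row $i$ for $i=1,\dots,n-1$ and replacing the last row by the sum of all the others yields an upper triangular matrix with unit diagonal mod $2$, so $\det A_k$ is odd, hence nonzero. With these two repairs --- a genuine lower bound on the rank for odd $n$, and a correct nonvanishing argument for even $n$ --- your proof is complete.
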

\begin{proof}
It is well-known (see e.g.\ \cite[Example~8.14]{PLV}) that the rank of the diagonal Poisson structure~$\pin nk$ (at
a generic point) is equal to the rank of its defining matrix~$A_k$. Let us first show that the rank of $A_k$ is $n$
when $n$ is even. To do this, we show that the determinant of $A_k$ is $1$ modulo $2$. This is done by replacing in
$A_k$ the $i$-th row by the sum (modulo~$2$) of its $i$-th and $(i+1)$-th rows, for $i=1,\dots,n-1$; also, we
replace the last row by the sum (modulo~$2$) of all the other rows of $A_k$. The resulting matrix is upper
triangular, with all its diagonal entries equal to $1$ modulo 2. This proves that when $n$ is even, $A_k$ is of
rank $n$. When $n$ is odd, $A_k$ cannot be of rank $n$ because $A_k$ is skew-symmetric, but the top left principal
minor of $A_k$ is invertible, since it is of the above form (modulo~$2$), hence the rank of $A_k$ is $n-1$. To
prove that $C$ is a Casimir of $\pi_k$ when $n$ is odd it suffices to show that $\pb{x_i,C}=0$ for $i=1,\dots,n$,
which is easily done by direct computation, using (\ref{eq:pik_def_2}). Alternatively, one checks using
(\ref{eq:A_k_def_2}) that the following vector
\begin{equation*}
  (\underbrace{1,1,\dots,1}_k,\underbrace{1,-1,1,-1,\dots,1,-1}_{n-2k-1},1,\underbrace{1,1,\dots,1}_k)
\end{equation*}%
is a null vector of $A_k$.
\end{proof}
We show in the following two propositions how the Poisson structures $\pi_k$ are related.
\begin{prop}\label{prp:inclusion}
  For $\ell=0,\dots,n$, consider the inclusion map
  \begin{equation}
    \begin{array}{lcccl}
      \imath_\ell&:&\bbR^n&\to&\bbR^{n+1}\\
                 & &(x_1,x_2,\dots,x_n) &\mapsto&(x_1,x_2,\dots,x_{\ell},0,x_{\ell+1},x_{\ell+2},\dots,x_n)\;.
    \end{array}
  \end{equation}
  For any $k$ with $0\leqs k\leqs n$, the linear subspace $\imath_\ell\bbR^n$ is a Poisson submanifold of
  $(\bbR^{n+1},\pin{n+1}k)$, and so $\imath_\ell$ is a Poisson map, when $\bbR^n$ is equipped with the reduced
  Poisson structure:
  \begin{enumerate}
    \item[a)] If $k<\ell\leqs n-k$, then the reduced Poisson structure (on $\imath_\ell\bbR^n\simeq\bbR^n$) is~$\pin
      {n}{k}$;
    \item[b)] If $k=0$, then the reduced Poisson structure is $\pin {n}{0}$;
    \item[c)] If $\ell=n$ and $k>0$, then the reduced Poisson structure is $\pin {n}{k-1}$.
  \end{enumerate}
  In each one of these cases, the Hamiltonian system on $(\bbR^{n+1},\pin{n+1}k)$ defined by a function $H$
  restricts on $\imath_\ell\bbR^n$ to a Hamiltonian system, with the restriction of $H$ as Hamiltonian.
\end{prop}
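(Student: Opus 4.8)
The plan is to verify directly that $\imath_\ell\bbR^n$ is a Poisson submanifold of $(\bbR^{n+1},\pin{n+1}k)$, and then identify the reduced bracket. Recall that a linear subspace defined by the vanishing of one coordinate, say $\set{y_{\ell+1}=0}\subset\bbR^{n+1}$ (writing $y_1,\dots,y_{n+1}$ for the coordinates on $\bbR^{n+1}$), is a Poisson submanifold of a diagonal Poisson structure if and only if $\pb{y_{\ell+1},y_j}$ vanishes on that subspace for every $j$; but for a diagonal bracket $\pb{y_{\ell+1},y_j}$ is a multiple of $y_{\ell+1}y_j$, which already vanishes on $\set{y_{\ell+1}=0}$. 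Hence \emph{every} coordinate hyperplane is automatically a Poisson submanifold of $\pin{n+1}k$, and $\imath_\ell$ is a Poisson map onto its image for all $\ell$ and all $k$; the restricted bracket on $\bbR^n$ is obtained simply by deleting the $(\ell+1)$-th row and column of $A_k^{(n+1)}$ and relabelling. The last sentence of the proposition (about Hamiltonian systems) is then immediate: restriction of functions to a Poisson submanifold intertwines Hamiltonian vector fields, so the Hamiltonian system defined by $H$ on $(\bbR^{n+1},\pin{n+1}k)$ restricts to the one defined by $H|_{\imath_\ell\bbR^n}$.

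The substance of the proof is therefore the combinatorial identification in cases a), b), c): I must show that the matrix $B$ obtained from $A_k^{(n+1)}$ by deleting row and column $\ell+1$ equals, respectively, $A_k^{(n)}$, $A_0^{(n)}$, or $A_{k-1}^{(n)}$. For this I would work entirely from the closed formula \eqref{eq:A_k_def_2}: for $1\leqs i<j\leqs n$ the $(i,j)$ entry of $B$ is $\eps{(n+1)+i'}{k+j'}$, where $i'$ (resp.\ $j'$) equals $i$ if $i<\ell+1$ and $i+1$ if $i\geqs\ell+1$ (similarly for $j'$), because deleting the column shifts the indices past position $\ell$ up by one. In case a), where $k<\ell\leqs n-k$, one checks that the shift never changes whether the strict inequality ``$(n+1)+i' > k+j'$'' holds compared with ``$n+i > k+j$'': when both $i,j<\ell+1$ we have $i'=i$, $j'=j$ and the ``$n+1$'' versus ``$n$'' discrepancy is harmless because $i'<j'$ forces a gap; when both are $\geqs\ell+1$ both get shifted by one and the comparison is unchanged; the only delicate range is $i<\ell+1\leqs j$, where $i'=i$, $j'=j+1$, and one must confront ``$n+1+i$ vs $k+j+1$'' against the target ``$n+i$ vs $k+j$'' — here the hypothesis $\ell\leqs n-k$ (equivalently $j$ not too large) guarantees one stays on the $+1$ side and the hypothesis $k<\ell$ (equivalently $i$ past the initial block of $+1$'s) guarantees one stays on the correct side at the boundary, so $B_{i,j}=\eps{n+i}{k+j}=(A_k^{(n)})_{i,j}$. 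The quickest way to organize all of this, rather than the inequality bookkeeping above, is to describe the first row of $B$ explicitly: deleting an interior coordinate from the first row $(0,1,\dots,1,-1,\dots,-1)$ of $A_k^{(n+1)}$ and keeping it Toeplitz, one sees the first row of $B$ is again of the form $(0,1,\dots,1,-1,\dots,-1)$ with the number of trailing $-1$'s equal to $k$ (case a), since a deleted entry lies strictly between the $+1$ block and the $-1$ block only when $k<\ell\leqs n-k$), equal to $k$ again trivially when $k=0$ (case b, there are no $-1$'s to begin with and deleting a $+1$ or the leading $0$ and re-Toeplitzing still gives all $1$'s off-diagonal), and equal to $k-1$ when we delete the very last coordinate with $k>0$ (case c, one trailing $-1$ disappears); since a skew-symmetric Toeplitz matrix is determined by its first row, this pins down $B$ in each case. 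I would still include the one-line check that $B$ is genuinely Toeplitz — i.e.\ that deleting the same index from rows and columns preserves constancy along descending diagonals — which follows because entries of $A_k^{(n+1)}$ depend only on $j-i$ through \eqref{eq:A_k_def_2} except across the ``wrap'' that is absent here since these are Toeplitz, not circulant.

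The main obstacle is purely the case analysis at the boundary indices in part a): one has to be careful that the shift by one in the second argument of $\eps\cdot\cdot$ (coming from deleting column $\ell+1$) exactly cancels the extra $+1$ in the first argument (coming from $n\mapsto n+1$), and that this cancellation is valid precisely on the stated range $k<\ell\leqs n-k$ and fails outside it — which is why cases b) and c) must be stated separately with different outcomes. Everything else is bookkeeping with \eqref{eq:A_k_def_2}. I do not expect to need Proposition~\ref{prop:pi_rank} or any Jacobi-identity verification, since the Jacobi identity for diagonal brackets was already granted in the text and Poisson submanifolds of Poisson manifolds inherit their brackets automatically.
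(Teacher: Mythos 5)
Your overall route is the same as the paper's: the diagonal form of the bracket makes every coordinate hyperplane a Poisson submanifold, and the reduced bracket is read off by deleting row and column $\ell+1$ of $A_k^{(n+1)}$ and re-indexing, so everything reduces to comparing $\eps{(n+1)+i'}{k+j'}$ with the target $\eps{n+i}{k+j}$. However, your case analysis of that comparison is inverted, and the step you rely on in the unmixed ranges would fail as stated. With $i'=i,\ j'=j$ for indices $\leqs\ell$ and $i'=i+1,\ j'=j+1$ for indices $>\ell$, the mixed range $i\leqs\ell<j$ gives $\eps{n+1+i}{k+j+1}$, which equals $\eps{n+i}{k+j}$ \emph{identically} (both arguments shift by one), so no hypothesis on $k,\ell$ is needed there --- this is the automatic case, not the delicate one. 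The genuinely delicate ranges are the unmixed ones: for $j\leqs\ell$ one gets $\eps{n+1+i}{k+j}$, and for $\ell<i$ one gets $\eps{n+2+i}{k+j+1}=\eps{n+1+i}{k+j}$; in both, a residual $+1$ survives in the first argument, so the entry differs from the target exactly when $n+i=k+j$, i.e.\ $j-i=n-k$. It is precisely here that the hypotheses enter: in case a), $j-i=n-k$ forces $j>n-k\geqs\ell$ and $i\leqs k<\ell$, so no such pair lies in an unmixed range; in case b) the equality is impossible; in case c) every pair is unmixed and the systematic shift $\eps{n+1+i}{k+j}=\eps{n+i}{(k-1)+j}$ is exactly what produces $A_{k-1}^{(n)}$. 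Your assertions that in the unmixed ranges ``the comparison is unchanged'' and that the discrepancy is ``harmless because $i'<j'$ forces a gap'' are incorrect, and they skip the one place where $k<\ell\leqs n-k$ is actually used.

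The same inversion undermines your proposed shortcut. Deleting a row and column from a skew-symmetric Toeplitz matrix is \emph{not} Toeplitz in general, so ``the first row determines $B$'' cannot be had for free: for $A_2^{(4)}$ with first row $(0,1,-1,-1)$ and $\ell=1$, the deleted matrix has $B_{1,2}=A_{1,3}=-1$ but $B_{2,3}=A_{3,4}=+1$. (This parameter choice lies outside cases a)--c), which is consistent with the proposition, but it shows your ``one-line check'' is false as a general statement about Toeplitz matrices.) The obstruction has nothing to do with circulant wrap-around: entries $B_{i,j}$ with $i\leqs\ell<j$ are taken from diagonal $j-i+1$ of $A_k^{(n+1)}$ while the others come from diagonal $j-i$, and these two diagonals of $A_k^{(n+1)}$ differ at the $+1/-1$ transition. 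Ruling out that transition in the relevant ranges is exactly the inequality bookkeeping above, so the Toeplitz reformulation presupposes the case analysis rather than replacing it. The rest of your argument (Poisson submanifold from the diagonal bracket, restriction of Hamiltonian systems) is correct and matches the paper.
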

\begin{proof}
Recall that a submanifold $N$ of a Poisson manifold $(M,\Pi)$ is a \emph{Poisson submanifold} if all Hamiltonian
vector fields of $(M,\Pi)$ are tangent to $N$ (at points of $N$). In our case $N$ is the submanifold of
$M:=\bbR^{n+1}$, defined by $y_{\ell+1}=0$, where we denote by $y_1,\dots,y_{n+1}$ the standard coordinates on
$\bbR^{n+1}$. Let $F$ be a function on $M$ and consider its Hamiltonian vector field, which is given by
$\X_F:=\Pb{F}$. Thanks to the diagonal nature of the brackets $\pi_k^{(n+1)}$, we see that
$\X_F[y_{\ell+1}]=\pb{y_{\ell+1},F}=y_{\ell+1}G$, for some function $G$ on $\bbR^{n+1}$, hence the bracket vanishes
on $N$.  So, $\X_F$ is tangent to $N$ and $N$ is a Poisson submanifold of $(M,\Pi)$. If we denote by $p_\ell$ the
natural projection of $\bbR^{n+1}$ on $\imath_\ell\bbR^n\simeq\bbR^n$, then for any function $F$ on $N$ an
extension of $F$ to $M$ is given by $\tilde F:=F\circ p_\ell$, and so the reduced Poisson structure on $N$ is given
by the following brackets:
\begin{equation}\label{eq:poisson_red}
  \pb{x_i,x_j}_N:=\pb{x_i\circ p_\ell,x_j\circ p_\ell}_k^{(n+1)}\circ\imath_\ell\;,\qquad 1\leqs i,j\leqs n\;.
\end{equation}%
Since $\xx i\circ p_\ell=y_i$ when $i\leqs\ell$ and $\xx i\circ p_\ell=y_{i+1}$ when $\ell<i$, the right hand
side of (\ref{eq:poisson_red}) is given (for $i<j$) by
$$
\begin{array}{rc}
  \eps{n+i+1}{j+k+1}\xx i\xx j\;,\qquad &i\leqs\ell<j\;,\\
  \eps{n+i+1}{j+k}\xx i\xx j\;,\qquad &j\leqs\ell\hbox{ or } \ell<i\;.
\end{array}
$$
In cases a) and b) both formulas amount to $\eps{n+i}{j+k}\xx i\xx j=\pb{\xx i,\xx j}_k$, while they amount
 in case c) to $\eps{n+i}{j+k-1}\xx i\xx j=\pb{\xx i,\xx j}_{k-1}$.
\end{proof}
Notice that, since the reduced Poisson structure belongs again to our class of Poisson structures, the use of the
proposition can be repeated one or several times.  For example, as indicated in the introduction, one can by
repeated use of c) realize $\LV(n,0)$ as a Poisson reduction of $\LV(2n-1,n-1)$.
\begin{prop}\label{prop:poisson_map}
  For any $k$ with $0<2k<n$, the map defined by
\begin{equation}\label{eq:phi_def}
  \begin{array}{lcccl}
    \phi_k&:&(\bbR^n,\pin nk)&\to&(\bbR^{n-2k},\pin {n-2k}0)\\
                & &(x_1,x_2,\dots,x_n) &\mapsto&x_1x_2\dots x_k(x_{k+1},x_{k+2},\dots,x_{n-k})x_{n-k+1}\dots x_n\;,
  \end{array}
\end{equation}
is a Poisson map.
\end{prop}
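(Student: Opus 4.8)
The plan is to verify the defining property of a Poisson map on the coordinate functions. Write $z_1,\dots,z_{n-2k}$ for the standard coordinates on $\bbR^{n-2k}$ and put $D:=x_1x_2\cdots x_k\,x_{n-k+1}\cdots x_n$, so that $z_j\circ\phi_k=D\,x_{k+j}$ for $j=1,\dots,n-2k$. Since the maps $(f,g)\mapsto\pb{f\circ\phi_k,g\circ\phi_k}_{\pin{n}{k}}$ and $(f,g)\mapsto\pb{f,g}_{\pin{n-2k}{0}}\circ\phi_k$ are both biderivations on $C^\infty(\bbR^{n-2k})$, and $\bbR^{n-2k}$ has global coordinates, it is enough to prove
$$\pb{D\,x_{k+i},\,D\,x_{k+j}}_{\pin{n}{k}}=\pb{z_i,z_j}_{\pin{n-2k}{0}}\circ\phi_k\qquad(1\leqs i,j\leqs n-2k)\;.$$
Both sides vanish when $i=j$, and by skew-symmetry we may assume $i<j$; since every upper-triangular entry of $A_0$ equals $1$, the right-hand side is then $(D\,x_{k+i})(D\,x_{k+j})$, so the task reduces to computing the left-hand side.

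The first ingredient I would set up is the elementary identity that, for the diagonal bracket $\pb{\cdot,\cdot}_{\pin{n}{k}}$ and any two monomials $u=\prod_a x_a^{p_a}$, $v=\prod_b x_b^{q_b}$, the Leibniz rule gives $\pb{u,v}_{\pin{n}{k}}=\bigl(\sum_{a,b}p_aq_b(A_k)_{a,b}\bigr)uv$. I apply this with $v=x_m$ for an index $m$ satisfying $k<m\leqs n-k$, and with $u=D$ (whose exponent vector is $1$ on $S:=\{1,\dots,k\}\cup\{n-k+1,\dots,n\}$ and $0$ elsewhere); the crucial point is the cancellation $\sum_{a\in S}(A_k)_{a,m}=0$. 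Indeed, by (\ref{eq:A_k_def_2}), for $a\leqs k$ one has $a<m$ and $(A_k)_{a,m}=\eps{n+a}{k+m}=1$ because $n+a\geqs n+1>n\geqs k+m$, whereas for $a\geqs n-k+1$ one has $a>m$ and $(A_k)_{a,m}=-\eps{n+m}{k+a}=-1$ because $n+m\geqs n+k+1>n+k\geqs k+a$; the $k$ positive and $k$ negative contributions cancel. Thus I obtain the auxiliary identity
$$\pb{D,x_m}_{\pin{n}{k}}=0\qquad\text{for all }m\in\{k+1,k+2,\dots,n-k\}\;.$$

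With this in hand the proposition follows from repeated use of the Leibniz rule; all brackets below are $\pb{\cdot,\cdot}_{\pin{n}{k}}$. Since $1\leqs i,j\leqs n-2k$ forces $k+i,k+j\in\{k+1,\dots,n-k\}$, the auxiliary identity yields $\pb{D,x_{k+i}}=\pb{D,x_{k+j}}=0$, hence $\pb{D\,x_{k+i},D}=\pb{D,D}x_{k+i}+D\pb{x_{k+i},D}=0$, and therefore
$$\pb{D\,x_{k+i},D\,x_{k+j}}=D\,\pb{D\,x_{k+i},x_{k+j}}=D\bigl(\pb{D,x_{k+j}}x_{k+i}+D\,\pb{x_{k+i},x_{k+j}}\bigr)=D^{2}\pb{x_{k+i},x_{k+j}}\;.$$
Finally $\pb{x_{k+i},x_{k+j}}=(A_k)_{k+i,k+j}x_{k+i}x_{k+j}$, and for $i<j$ one has $(A_k)_{k+i,k+j}=\eps{n+k+i}{2k+j}=1$, since $n+k+i>2k+j$ is equivalent to $n>k+j-i$ and $k+j-i\leqs k+(n-2k)-1=n-k-1<n$. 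Hence the left-hand side equals $D^{2}x_{k+i}x_{k+j}=(D\,x_{k+i})(D\,x_{k+j})$, which matches the right-hand side, and $\phi_k$ is a Poisson map.

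I do not anticipate any genuine obstacle here — the verification is a short, direct computation. The only thing worth planning around is avoiding a brute-force expansion of $\pb{D\,x_{k+i},D\,x_{k+j}}_{\pin{n}{k}}$ into its numerous terms; the auxiliary identity $\pb{D,x_m}_{\pin{n}{k}}=0$ for the ``middle'' indices $m$ is precisely the device that collapses the whole bracket to the single structure constant $(A_k)_{k+i,k+j}$.
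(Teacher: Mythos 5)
Your proposal is correct and follows essentially the same route as the paper: both reduce to checking the bracket on the pulled-back coordinates $D\,x_{k+i}$ (the paper calls $D$ there $P_k$), use the Leibniz rule, and exploit the same key cancellation $\pb{D,x_m}_k^{(n)}=0$ for $k<m\leqs n-k$ (the paper phrases it as $\pb{x_m,x_1\cdots x_k}=-k\,x_m x_1\cdots x_k$ and $\pb{x_m,x_{n-k+1}\cdots x_n}=k\,x_m x_{n-k+1}\cdots x_n$, whose sum vanishes), together with $(A_k)_{k+i,k+j}=1=(A_0^{(n-2k)})_{i,j}$ for $i<j$. The only difference is presentational: you verify the cancellation as a single sum of structure constants rather than as two separate brackets.
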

\begin{proof}
Let us denote the natural coordinates on $\bbR^{n-2k}$ by $y_1,\dots,y_{n-2k}$. We need to show that
\begin{equation}\label{eq:phi_to_show}
  \pb{y_i,y_j}_0^{(n-2k)}\circ\phi_k=\pb{y_i\circ\phi_k,y_j\circ\phi_k}_k^{(n)}
\end{equation}
for all $i,j$ with $1\leqs i<j\leqs n-2k$. Let us denote by $P_k$ the product
of the first and last $k$ coordinates of $\bbR^n$, $P_k=x_1x_2\dots x_kx_{n-k+1}x_{n-k+2}\dots x_n$.
Then $y_i\circ\phi_k=P_kx_{i+k}$, and so the right hand side of (\ref{eq:phi_to_show}) is given by
$$
  \pb{P_kx_{i+k},P_kx_{j+k}}_k^{(n)}=  P_k^2\pb{x_{i+k},x_{j+k}}_k^{(n)}+
    x_{j+k}P_k\pb{x_{i+k},P_k}_k^{(n)}-x_{i+k}P_k\pb{x_{j+k},P_k}_k^{(n)}\;.
$$
The first term in this expression is the left hand side of (\ref{eq:phi_to_show}), since both are equal to
$P_k^2{x_{i+k}x_{j+k}}$ (no signs!); the second and third terms are both equal to zero, because
$\pb{x_{\ell},x_1x_2\dots x_k}_k^{(n)}=-kx_{\ell}x_1x_2\dots x_k,$ and $\pb{x_{\ell},x_{n-k+1}\dots
  x_n}_k^{(n)}=kx_{\ell}x_{n-k+1}\dots x_n$, for any $\ell$ with $k<\ell\leqs n-k$.
\end{proof}
We will also make use of the involution $\psi:\bbR^n\to\bbR^n$, defined by
\begin{equation}\label{eq:psi_def}
  \psi(x_1,x_2,\dots,x_n):=(x_n,\dots,x_2,x_1)\;.
\end{equation}
Clearly it is, for any $k$ with $0\leqslant k<n$, an anti-Poisson map from $(\bbR^n,\pin nk)$ to itself.

\section{Definition of the systems $\LV(n,k)$ and their first integrals} \label{sec:definition_of_systems_and_integrals}

We now introduce the Lotka-Volterra lattices which will be studied in this paper. Fix $n$ and $k$ with $0\leqs
k<n$. Let us recall from (\ref{eq:A_k_def_2}) that $A_k$ denotes the skew-symmetric $n\times n$ Toeplitz matrix,
whose first row is given by $(0,1,1,\dots,1, -1,-1,\dots,-1)$, with $-1$ appearing $k$ times. The corresponding
Lotka-Volterra system is given by
\begin{equation}\label{eq:LV}
  \dot x_i=\sum_{j=1}^n (A_k)_{i,j} x_ix_j\;.
\end{equation}%
We will denote this system by $\LV(n,k)$. It is a Hamiltonian system, with Hamiltonian $H:=x_1+x_2+\dots+x_n$ and
Poisson structure $\pi_k$. As pointed out in the introduction, the Liouville and superintegrability of $\LV(n,0)$
have been shown recently by van der Kamp et al., \cite{KKQTV} while the Liouville integrability of $\LV(2k+1,k)$
has been established by Bogoyavlenskij \cite{Bog2} and Itoh \cite{itoh1,itoh2}. The systems which will be
considered here interpolate between these two integrable systems in the following sense. Consider $\LV(n,k)$, where
$n>2k+1$. On the one hand, setting the last $k$ coordinates of $\bbR^n$ equal to zero, we arrive at the reduced
Hamiltonian system $\LV(n-k,0)$, which is of the type studied in \cite{KKQTV}. On the other hand, $\LV(n,k)$ can be
obtained by reduction from the Bogoyavlenskij-Itoh system $\LV(2n-2k-1,n-k-1)$ by setting the last $n-2k-1$
coordinates of $\bbR^{2n-2k-1}$ equal to zero. In what follows, it is these systems $\LV(n,k)$, with $n>2k+1$,
which we will analyze from the integrable point of view.

For future reference we first give a Lax equation for (\ref{eq:LV}). For the special case
of $\LV(2k+1,k)$ the following Lax equation, with spectral parameter~$\l$, was provided by Bogoyavlenskij in
\cite{Bog1}:
\begin{equation}\label{eq:bogo_lax}
  (X+\lambda M)^\cdot=[X+\lambda M,B-\lambda M^{k+1}]
\end{equation}
where for $1\leqslant i,j\leqslant 2k+1$ the $(i,j)$-th entry of the matrices $X,\,M$ and $B$ is respectively given
by
\begin{equation}\label{eq:bogo_lax_entries}
  X_{i,j}:=\delta_{i,j+k}x_i\;,\quad M_{i,j}:=\delta_{i+1,j}\;,\quad
  B_{i,j}:=b_i:=-\delta_{i,j}(x_i+x_{i+1}+\cdots+x_{i+k})\;.
\end{equation}%
In the right hand side of these formulas, all indices are taken modulo $2k+1$ so that, for example, $M_{2k+1,1}=1$.
To check that (\ref{eq:bogo_lax}) is equivalent to (\ref{eq:LV}) (with $n=2k+1$) it is sufficient to check that
(\ref{eq:LV}) is equivalent with $\dot X=[X,B]$ and (since $M$ is constant) that $[M,B]-[X,M^{k+1}]=0$.  For the
latter, one finds at once from (\ref{eq:bogo_lax_entries}) that
\begin{equation*}
  ([M,B]-[X,M^{k+1}])_{i,j}=\delta_{i+1,j}(b_j-b_i-x_i+x_{j+k})=0\;.
\end{equation*}%
Also, since $B$ is a diagonal matrix, $[X,B]_{i,j}=X_{i,j}(b_j-b_i)$, with non-zero entries only when $j=i-k$; for
these entries, one has from the Lax equation
\begin{equation*}
  \dot x_i=\dot X_{i,i-k}=[X,B]_{i,i-k}=x_i(b_{i-k}-b_i)\;,
\end{equation*}%
which is the right hand side of (\ref{eq:LV}) (recall that $n=2k+1$). The Lax equation for
the general case ($n>2k+1$) is obtained from this Lax equation by substituting $0$ for the last variables.

\subsection{The rational first integrals}\label{subsec:rational_integrals}
We will first construct a set of rational first integrals for $\LV(n,k)$, where $n>2k+2$. To do this, we will use
the map $\phi_k$, defined in Proposition~\ref{prop:poisson_map}: we construct $n-2k-2$ rational functions on
$\bbR^n$ by pulling back (using $\phi_k$) the $n-2k-2$ independent rational first integrals of $\LV(n-2k,0)$
(except the Hamiltonian), which were constructed in~\cite{KKQTV}. We will then show that this yields $n-2k-2$
independent first integrals of $\LV(n,k)$.

We first recall the explicit formulas for the rational first integrals that were introduced in
\cite{KKQTV}. Setting $m:=n-2k$ and $r:=\left[\frac{m+1}2\right]$ and denoting the coordinates on $\bbR^{m}$ by
$y_1,\dots,y_m$, the first set of rational first integrals of $\LV(n-2k,0)$ (roughly the first half) is given for
$1\leqs \ell\leqs r$ by
\begin{equation}\label{E:integrals}
  F_\ell:= \left\{ \begin{array}{ll}
    \left(y_1+y_2+\cdots+y_{2\ell-1}\right)\ds\frac{y_{2\ell+1}y_{2\ell+3}\ldots y_{m}}{y_{2\ell}y_{2\ell+2}
        \ldots y_{m-1}}&\mbox{ when}\  m\mbox{ is odd},\\
    \\
    \left(y_1+y_2+\cdots+y_{2\ell}\right)\ds\frac{y_{2\ell+2}y_{2\ell+4}\ldots y_m}{y_{2\ell+1} y_{2\ell+3}
          \ldots y_{m-1}}&\mbox{ when} \  m\mbox{ is even}.
    \end{array} \right.
\end{equation}
The other rational first integrals are obtained by using the involution $\psi$ (see (\ref{eq:psi_def})): set
$G_\ell:=\psi^* F_\ell$ for $1\leqs \ell\leqs r$. It leads to the following $m-1$ different (in fact, functionally
independent) functions:
\begin{align}
  F_1=G_1,F_2,\dots,F_{r-1},G_2,\dots,G_{r-1},F_r=G_{r},\mbox{ when $m$ is odd,} \label{eq:fun_odd}\\
  F_1,\dots,F_{r-1},G_1,\dots,G_{r-1},F_r=G_{r},\mbox{ when $m$ is even.} \label{eq:fun_even}
 \end{align}
We denote the pull-backs via $\phi_k$ of these functions (in that order) by $H^{(n,k)}_1,H^{(n,k)}_2,$
$\dots,H^{(n,k)}_{m-1}$. In formulas, this means that
\begin{equation}\label{eq:pullbacks}
  H^{(n,k)}_\ell:=\phi_k^*  H^{(n-2k,0)}_\ell\;,\mbox{ for $\ell=1,\dots,m-1\;,$}
\end{equation}%
where $H^{(n-2k,0)}_1,\dots,H^{(n-2k,0)}_{m-1}$ stand for the functions in (\ref{eq:fun_odd}) or
(\ref{eq:fun_even}). In what follows, we will not consider the last function, to wit
$H^{(n,k)}_{m-1}=\phi^*_kF_r=\phi^*_kG_r$; in fact, $F_r$ is the Hamiltonian of $\LV(m,0)$, $F_r=y_1+\dots,y_m$,
and so $\phi^*_kF_r$ is a \emph{polynomial} first integral which we will recover in a different way in the next
section, together with the other polynomial first integrals.

For example, when $n$ is odd, the fact that $\phi_k^*y_i=x_1x_2\dots x_kx_{i+k}x_{n-k+1}\dots x_{n-1}x_n$ implies
for $\ell=1,\dots,r-1=\frac{n-1}2-k$ that
\begin{eqnarray}\label{eq:H_odd}
  H^{(n,k)}_\ell&=&x_1x_2\dots x_k\left(x_{k+1}+x_{k+2}+\cdots+x_{k+2\ell-1}\right)
    \ds\frac{x_{k+2\ell+1}x_{k+2\ell+3}\ldots x_{n-k}}{x_{k+2\ell}x_{k+2\ell+2}
      \ldots x_{n-k-1}}x_{n-k+1}\dots x_{n-1}x_n\;,\nonumber\\
    &=&\hat H^{(n,k)}_\ell\left(x_{k+1}+x_{k+2}+\cdots+x_{k+2\ell-1}\right)\;,
\end{eqnarray}%
where we have introduced in the last line a notation\footnote{For $n$ even, $\hat H^{(n,k)}_\ell$ is defined in the
same way, to the effect that $H^{(n,k)}_\ell=\hat H^{(n,k)}_\ell\left(x_{k+1}+x_{k+2}+\cdots+
x_{k+2\ell}\right).$}, which will turn out to be very useful. The functions $H^{(n,k)}_\ell$ and $\hat
H^{(n,k)}_\ell$, with $\ell=r,\dots,n-2k-2$ can be obtained by applying $\psi^*$ to these functions, because
$\phi_k$ and $\psi$ commute.

We will now show that the functions $H^{(n,k)}_\ell$ are first integrals of $\LV(n,k)$. To do this, we will use the
following lemma:
\begin{lemma}\label{lma:xs}
  Let $\ell=1,\dots,n-2k-2$ and let $j$ denote an index which is present in the sum which appears in
  $H^{(n,k)}_\ell$ (see (\ref{eq:H_odd})).
  \begin{enumerate}
    \item[(1)] If the variable $x_s$ appears in $x_j\hat H^{(n,k)}_\ell$ then $\pb{x_s,x_j\hat
      H^{(n,k)}_\ell}_k^{(n)}=0$\;;
    \item[(2)] If the variable $x_s$ does not appear in $\hat H^{(n,k)}_\ell$ then $\pb{x_s,\hat
      H^{(n,k)}_\ell}_k^{(n)}=0$\;.
  \end{enumerate}
\end{lemma}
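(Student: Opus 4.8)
The plan is to prove both statements by a direct computation with the diagonal brackets $\pb{\cdot,\cdot}_k^{(n)}$, exploiting the very rigid sign pattern encoded in $\eps{n+i}{j+k}$ together with the explicit monomial structure of $\hat H^{(n,k)}_\ell$. Recall from (\ref{eq:H_odd}) (and the footnote for $n$ even) that $\hat H^{(n,k)}_\ell$ is, up to the deletion of finitely many variables in an alternating way, the Casimir-like monomial
\begin{equation*}
  \hat H^{(n,k)}_\ell = x_1\cdots x_k\,\frac{x_{k+2\ell+1}x_{k+2\ell+3}\cdots x_{n-k}}{x_{k+2\ell}x_{k+2\ell+2}\cdots x_{n-k-1}}\,x_{n-k+1}\cdots x_{n-1}x_n\;,
\end{equation*}
so that multiplying by any single $x_j$ with $k+1\leqs j\leqs k+2\ell-1$ (an index present in the sum) produces a monomial whose exponent vector is $(\underbrace{1,\dots,1}_{k},\text{(alternating }\pm1\text{ pattern on the middle block)},\underbrace{1,\dots,1}_{k})$. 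The key observation to isolate first is that, because the bracket is diagonal, for any monomial $P=\prod_s x_s^{a_s}$ and any variable $x_s$ one has $\pb{x_s,P}_k^{(n)} = \big(\sum_t a_t (A_k)_{s,t}\big)\,x_s P$; hence statement (1) reduces to checking that the relevant exponent vector is a null vector of $A_k$, and statement (2) reduces to the analogous vanishing after discarding the $x_j$.

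First I would treat statement (1). Here the exponent vector of $x_j\hat H^{(n,k)}_\ell$ is supported on all of $\{1,\dots,n\}$, with value $1$ on the first $k$ and last $k$ coordinates, value $1$ on the indices $k+1,\dots,k+2\ell-1$ (the full sum block, once the extra $x_j$ is absorbed — note that $\hat H$ already contributes nothing on this block, and including $x_j$ fills exactly one slot; but in fact one must be a little careful: only one $x_j$ is added, so the vector has a single $1$ there and zeros on the rest of that block, and then the alternating $\pm1$ on $k+2\ell,\dots,n-k$). Rather than juggling three regimes, the clean route is to compare with the genuine null vector of $A_k$ already exhibited in the proof of Proposition~\ref{prop:pi_rank}, namely
\begin{equation*}
  v := (\underbrace{1,\dots,1}_{k},\underbrace{1,-1,1,-1,\dots,1,-1,1}_{n-2k-1\text{ entries}},\underbrace{1,\dots,1}_{k})\;,
\end{equation*}
which satisfies $A_k v = 0$. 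I claim the exponent vector of $x_j\hat H^{(n,k)}_\ell$ differs from $v$ only in coordinates lying strictly inside the middle block, and that the difference $w$ is itself a null vector of $A_k$; equivalently, that $\pb{x_s, x_j\hat H^{(n,k)}_\ell}_k^{(n)}=0$ for all $s$. Concretely, $x_j\hat H^{(n,k)}_\ell$ is (up to a harmless overall sign convention) the Casimir $C$ of (\ref{eq:Casimir}) with the single variable $x_j$ moved from the denominator-block-pattern to an honest factor — and since $C$ is a Casimir, $\pb{x_s,C}_k^{(n)}=0$ for all $s$; the modification by one variable in the alternating block changes the exponent vector by $\pm 2e_j$ on a coordinate where the bracket structure is insensitive, because (as one reads off directly from (\ref{eq:A_k_def_2})) the column of $A_k$ indexed by $j$ pairs trivially against $v$. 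I would verify this last point by the same row-reduction-mod-nothing argument: it is a one-line check from $\eps{n+i}{j+k}$ that $\sum_i (A_k)_{i,j}v_i = 0$ coordinatewise, and more is true — the partial telescoping sums vanish so that adding $\pm 2e_j$ still yields a null vector.

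For statement (2), the relevant monomial is $\hat H^{(n,k)}_\ell$ itself, whose exponent vector is supported on $\{1,\dots,k\}\cup\{k+2\ell,\dots,n-k\}\cup\{n-k+1,\dots,n\}$ with the alternating $\pm1$ on the middle part, and $x_s$ is a variable \emph{not} appearing there, i.e. $s\in\{k+1,\dots,k+2\ell-1\}$. So one must show $\sum_t a_t (A_k)_{s,t}=0$ where $a$ is this exponent vector and $s$ lies in the "gap". This is again most cleanly seen by subtracting from $v$: the vector $v$ has all $1$'s on $\{k+1,\dots,k+2\ell-1\}$ interleaved appropriately, and $v$ restricted to $\{k+2\ell,\dots,n-k\}$ carries exactly the alternating pattern of $\hat H^{(n,k)}_\ell$ provided the parity matches — which it does, by the way $r$ and the even/odd cases are set up. Hence $a = v - u$ where $u$ is supported on the gap $\{k+1,\dots,k+2\ell-1\}$, and $A_k v = 0$ reduces the claim to $\sum_{t\in\text{gap}} u_t (A_k)_{s,t}=0$ for $s$ also in the gap; but on the gap block $A_k$ restricts to the small Toeplitz matrix with all superdiagonal entries $+1$, i.e. $A_0^{(2\ell-1)}$, and $u$ is the alternating-$1$ vector which is precisely the null vector of that block (again by Proposition~\ref{prop:pi_rank}, since $2\ell-1$ is odd). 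So the sum vanishes.

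The main obstacle I anticipate is purely bookkeeping: correctly tracking the parity of the alternating $\pm1$ patterns in the four sub-blocks, so that the "$v$ minus a small null vector" decomposition is exact in \emph{both} the $n$ odd and $n$ even cases and for \emph{all} $\ell$ in the stated range (including the boundary $\ell = n-2k-2$ handled via $\psi^*$, where one must check $\psi$ intertwines the patterns — which follows since $\psi$ is anti-Poisson for $\pi_k$ and fixes the monomial structure up to reversal). There are no analytic difficulties and no hidden cancellations beyond those already packaged in Proposition~\ref{prop:pi_rank}; the entire proof is the remark that each exponent vector in sight is, after subtracting the global null vector $v$ of $A_k$, supported on a contiguous odd-length block on which $A_k$ restricts to a circulant-free Toeplitz matrix whose null vector is the alternating $1$'s — plus, for part (1), the observation that the extra factor $x_j$ perturbs things only in a direction $A_k$ cannot see.
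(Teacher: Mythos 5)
Your reduction to linear algebra is sound: for a Laurent monomial $P=\prod_t x_t^{a_t}$ the diagonal bracket gives $\pb{x_s,P}_k^{(n)}=\bigl(\sum_t a_t (A_k)_{s,t}\bigr)x_sP$, and your proof of item (2) for $n$ odd --- writing the exponent vector of $\hat H^{(n,k)}_\ell$ as $v-u$, with $v$ the null vector of $A_k$ from Proposition~\ref{prop:pi_rank} and $u$ the alternating vector supported on the gap $\{k+1,\dots,k+2\ell-1\}$, and noting that $A_k$ restricted to the gap is $A_0^{(2\ell-1)}$ with $u$ in its kernel --- is correct. Item (1), however, is where the genuine gap lies. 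The exponent vector of $x_j\hat H^{(n,k)}_\ell$ is $v+(e_j-u)$, not $v\pm2e_j$: when $\ell\geqs2$ it differs from the Casimir $C$ of (\ref{eq:Casimir}) in \emph{every} gap coordinate, not just at $j$. Worse, your intermediate claim that $\pb{x_s,x_j\hat H^{(n,k)}_\ell}_k^{(n)}=0$ \emph{for all} $s$ (i.e.\ that $x_j\hat H^{(n,k)}_\ell$ is a Casimir) is false: for $s$ in the gap with $s\neq j$ one finds, using $A_kv=0$ and $(A_ku)_s=0$, that $\pb{x_s,x_j\hat H^{(n,k)}_\ell}_k^{(n)}=(A_k)_{s,j}\,x_sx_j\hat H^{(n,k)}_\ell\neq0$; this is exactly why the lemma restricts $s$ to the variables appearing in $x_j\hat H^{(n,k)}_\ell$. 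Finally, ``adding $\pm2e_j$ still yields a null vector'' cannot be right, since $A_ke_j\neq0$ (the rank of $A_k$ is at least $n-1$).

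The hole is repairable inside your framework, but only by the case analysis you tried to bypass: setting $w:=e_j-u$, one checks $(A_kw)_s=0$ separately for $s\leqs k$, for $n-k+1\leqs s\leqs n$, for $k+2\ell\leqs s\leqs n-k$, and for $s=j$. In the first three ranges $(A_k)_{s,t}$ is constant (equal to $+1$, $-1$, $-1$ respectively) as $t$ runs over the gap, so $(A_ku)_s=\pm\sum_t u_t=\pm1=(A_k)_{s,j}$ and the two terms cancel; for $s=j$ both terms vanish, the second by the gap-block kernel property. This is the paper's computation (formulas (\ref{eq:brack_lemma}) and (\ref{eq:brack_lemma_2})) rewritten in matrix form. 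Two further caveats: the ``subtract the global null vector'' device is unavailable for $n$ even, since $A_k^{(n)}$ is then invertible by Proposition~\ref{prop:pi_rank}, so the even case must be settled by evaluating $\sum_t a_t(A_k)_{s,t}$ directly (it does vanish, but not by your decomposition); and once (1) is fixed you need not prove (2) separately --- as in the paper, if $x_s$ does not appear in $\hat H^{(n,k)}_\ell$ then $k<s<k+2\ell$, so taking $j:=s$ in (1) gives $0=\pb{x_s,x_s\hat H^{(n,k)}_\ell}_k^{(n)}=x_s\pb{x_s,\hat H^{(n,k)}_\ell}_k^{(n)}$.
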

\begin{proof}
We give the proof for $n$ odd. Using the involution $\psi$ if necessary, we may suppose that
$1\leqslant\ell\leqslant \frac{n-1}2-k$. Then $j$ satisfies $k<j< k+2\ell$. In order to prove (1), let us first
suppose that $1\leqs s\leqs k$. Then it follows from (\ref{eq:A_k_def_2}) and (\ref{eq:pik_def_2}) that
\begin{align}\label{eq:brack_lemma}
  &\pb{x_s,x_1x_2\dots x_k}=(k-2s+1)x_sx_1x_2\dots x_k\;, \nonumber\\
  &\pb{x_s,x_{n-k+1}\dots x_{n-1} x_n}=(2s-k-2)x_sx_{n-k+1}\dots x_{n-1} x_n\;,\\
  &\pb{x_s,x_j}=x_sx_j\;,\quad\hbox{ and }\quad \pb{x_s,x_{t+1}/x_t}=0\hbox{ if } t=k+2\ell,\dots,n-k-1\;.\nonumber
\end{align}
It follows from these formulas that $\pb{x_s,x_j\hat H^{(n,k)}_\ell}=0$ when $1\leqs s\leqs k$. The proof for $s$
satisfying $n-k+1\leqs s\leqs n$ is essentially the same. When $s=j$, the above formulas (\ref{eq:brack_lemma}) get
replaced by
\begin{align}\label{eq:brack_lemma_2}
  &\pb{x_s,x_1x_2\dots x_k}=-kx_sx_1x_2\dots x_k\;, \nonumber\\
  &\pb{x_s,x_{n-k+1}\dots x_{n-1} x_n}=kx_sx_{n-k+1}\dots x_{n-1} x_n\;,\\
  &\pb{x_s,x_j}=0\;,\quad\hbox{ and }\quad \pb{x_s,x_{t+1}/x_t}=0\hbox{ if } t=k+2\ell,\dots,n-k-1\;,\nonumber
\end{align}
and one arrives at the same conclusion. Finally, when $k+2\ell\leqs s\leqs n-k$ the first two formulas and the last
formula in (\ref{eq:brack_lemma_2}) are still valid, the third one gets replaced by $\pb{x_s,x_j}=-x_sx_j$, and the
last one gets replaced, depending on whether $s$ is even or odd (in that order) by
\begin{equation*}
  \pb{x_s,x_{s+1}/x_s}=x_sx_{s+1}/x_s\; \quad\hbox{ or }\quad
  \pb{x_s,x_{s}/x_{s-1}}=x_s^2/x_{s-1}\;.
\end{equation*}%
In either case, it follows again that $\pb{x_s,x_j\hat H^{(n,k)}_\ell}=0$. This finishes the proof of item (1).
Item (2) is an immediate consequence of item (1) because if $x_s$ does not appear in~$\hat H^{(n,k)}_\ell$ then
(still assuming that $1\leqslant\ell\leqslant \frac{n-1}2+k$) $k<s<k+2\ell$ and so $0=\pb{x_s,x_s\hat
  H^{(n,k)}_\ell}=x_s\pb{x_s,\hat H^{(n,k)}_\ell}$.
\end{proof}
\begin{prop}
  For any $k$ such that $n-2k-2>0$, the rational functions $H^{(n,k)}_\ell$ with $\ell=1,\dots,n-2k-2$ are first
  integrals of (\ref{eq:LV}).
\end{prop}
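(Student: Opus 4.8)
Being a first integral of (\ref{eq:LV}) is equivalent to Poisson-commuting with the Hamiltonian $H=x_1+x_2+\dots+x_n$, i.e.\ to $\pb{H,H^{(n,k)}_\ell}_k^{(n)}=0$, and the plan is to deduce this directly from Lemma~\ref{lma:xs}. Using the anti-Poisson involution $\psi$ (which commutes with $\phi_k$ and fixes $H$, so that $\pb{H,\psi^*F}_k=-\psi^*\pb{H,F}_k$), we may assume $n$ is odd and $1\leqs\ell\leqs\frac{n-1}2-k$. Write $H^{(n,k)}_\ell=\hat H^{(n,k)}_\ell\,S_\ell$ as in (\ref{eq:H_odd}), with $S_\ell:=x_{k+1}+x_{k+2}+\dots+x_{k+2\ell-1}$, and put $B:=\set{k+1,\dots,k+2\ell-1}$ and $A:=\set{1,\dots,n}\setminus B$. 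The key structural observation is that, by the explicit formula (\ref{eq:H_odd}), $\hat H^{(n,k)}_\ell$ is a Laurent monomial in the variables $x_s$ with $s\in A$ only; thus the variable sets of $\hat H^{(n,k)}_\ell$ and of $S_\ell$ are disjoint and together exhaust $x_1,\dots,x_n$.

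Now split $\pb{H,H^{(n,k)}_\ell}_k=\sum_{s=1}^n\pb{x_s,\hat H^{(n,k)}_\ell S_\ell}_k$ according to whether $s\in A$ or $s\in B$. If $s\in A$, then $x_s$ appears in $\hat H^{(n,k)}_\ell$, hence in $x_j\hat H^{(n,k)}_\ell$ for every $j\in B$, so item~(1) of Lemma~\ref{lma:xs} gives $\pb{x_s,x_j\hat H^{(n,k)}_\ell}_k=0$ for all $j\in B$, whence $\pb{x_s,\hat H^{(n,k)}_\ell S_\ell}_k=\sum_{j\in B}\pb{x_s,x_j\hat H^{(n,k)}_\ell}_k=0$. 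If $s\in B$, then $x_s$ does not appear in $\hat H^{(n,k)}_\ell$, so item~(2) of Lemma~\ref{lma:xs} gives $\pb{x_s,\hat H^{(n,k)}_\ell}_k=0$, and the Leibniz rule yields $\pb{x_s,\hat H^{(n,k)}_\ell S_\ell}_k=\hat H^{(n,k)}_\ell\pb{x_s,S_\ell}_k=\hat H^{(n,k)}_\ell\sum_{j\in B}(A_k)_{s,j}x_sx_j$. Summing over $s\in B$,
\begin{equation*}
  \sum_{s\in B}\pb{x_s,\hat H^{(n,k)}_\ell S_\ell}_k=\hat H^{(n,k)}_\ell\sum_{s,j\in B}(A_k)_{s,j}x_sx_j=0\;,
\end{equation*}
the last equality because $A_k$ is skew-symmetric (the $(s,j)$ and $(j,s)$ terms cancel and the diagonal terms vanish). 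Adding the two contributions gives $\pb{H,H^{(n,k)}_\ell}_k=0$.

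I do not expect a serious obstacle: the real work has been packaged into Lemma~\ref{lma:xs}, and what remains is the Leibniz-rule bookkeeping together with the observation that the leftover double sum over $B$ is killed by skew-symmetry. The only points requiring care are (i) checking that the variables of $\hat H^{(n,k)}_\ell$ and of $S_\ell$ are genuinely disjoint and complementary, so that the dichotomy $s\in A$ versus $s\in B$ is exhaustive and Lemma~\ref{lma:xs} applies on the nose, and (ii) the reduction of the cases $n$ even and $\frac{n-1}2-k<\ell\leqs n-2k-2$ to the one treated, via $\psi^*$. One might hope to avoid the computation by invoking that $\phi_k$ is a Poisson map (Proposition~\ref{prop:poisson_map}) and that the $H^{(n-2k,0)}_\ell$ are first integrals of $\LV(n-2k,0)$; but this only shows that $H^{(n,k)}_\ell$ Poisson-commutes with $\phi_k^*(y_1+\dots+y_{n-2k})=x_1\cdots x_k\,(x_{k+1}+\dots+x_{n-k})\,x_{n-k+1}\cdots x_n$, which is \emph{not} the Hamiltonian $H$, so the direct argument above appears to be the efficient route.
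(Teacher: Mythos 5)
Your proposal is correct and takes essentially the same route as the paper: the same factorization $H^{(n,k)}_\ell=\hat H^{(n,k)}_\ell\,S_\ell$, the same use of items (1) and (2) of Lemma~\ref{lma:xs} to dispose of the brackets with the variables appearing in $\hat H^{(n,k)}_\ell$ and in $S_\ell$ respectively, the same skew-symmetry cancellation, and the same reduction via $\psi$ to $n$ odd and $1\leqs\ell\leqs\frac{n-1}2-k$. The only difference is cosmetic: you split $\sum_s\pb{x_s,\hat H^{(n,k)}_\ell S_\ell}$ into the cases $s\in A$ and $s\in B$ at the outset, whereas the paper first replaces $H$ by $S_\ell$ using item (1) and then applies the Leibniz rule and item (2).
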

\begin{proof}
Again we give the proof only for $m$ odd. Since (\ref{eq:LV}) is the Hamiltonian vector field associated to
$H=\sum_{i=1}^nx_i$, it suffices to prove that $H^{(n,k)}_\ell$ and $H$ are in involution. This is shown in the
following computation, where we use item (1) of Lemma \ref{lma:xs} in the second step and item (2) in the fourth
step:
\begin{eqnarray*}
  \pb{H^{(n,k)}_\ell,H} &=& \pb{\hat H^{(n,k)}_\ell\sum_{i=k+1}^{k+2\ell-1}x_i,\sum_{i=1}^{n}x_i}=
  \pb{\hat H^{(n,k)}_\ell\sum_{i=k+1}^{k+2\ell-1}x_i,\sum_{i=k+1}^{k+2\ell-1}x_i}\\
  &=&\pb{\hat H^{(n,k)}_\ell,\sum_{i=k+1}^{k+2\ell-1}x_i}\sum_{i=k+1}^{k+2\ell-1}x_i=0\;.
\end{eqnarray*}%
\end{proof}

\subsection{The polynomial first integrals}\label{subsec:polynomial_integrals}
We will now construct $k$ independent polynomial first integrals for $\LV(n,k)$, besides the Hamiltonian $H$. We do
this by using the polynomial invariants which Bogoyavlenskij constructed for $\LV(2k+1,k)$ from the Lax equation
(\ref{eq:bogo_lax}). The characteristic polynomial of $X+\lambda M$ has the form
\begin{equation}\label{eq:bogo_char_poly}
  \det(X+\lambda M-\mu\Id)=\l^{2k+1}-\mu^{2k+1}+\sum_{i=0}^kK_i\l^{k-i}\mu^{k-i}\;,
\end{equation}%
where, by homogeneity, each $K_i$ is a homogeneous polynomial (in $x_1,\dots,x_{2k+1}$) of degree $2i+1$. One has
$K_0=x_1+x_2+\dots+x_{2k+1}=H$, the Hamiltonian, and $K_k=x_1x_2\dots x_{2k+1}$, which is a Casimir of
$\LV(2k+1,k)$. Being a coefficient of the characteristic polynomial of the Lax operator $X+\l M$, each one of the
$K_i$ is a first integral of $\LV(2k+1,k)$. In view of Proposition \ref{prp:inclusion}, the restrictions of these
integrals $K_i$ to $\LV(2k,k-1),\ \LV(2k-1,k-2),\dots,\LV(k+1,0)$ lead to first integrals for these systems, but
these restrictions may be trivial (zero). In order to find simpler formulas for these restrictions and to see when
they are zero, we give a combinatorial description of the polynomials~$K_i$; the description that we give is a
matricial reformulation of Itoh's original combinatorial description, given in \cite{itoh1}.

Fix $n$ and $k$ with $1<2k+1\leqs n$ and consider the matrix $A_k:=A_k^{(n)}$ defined in (\ref{eq:A_k_def_2}). Fix
$i\in\set{1,\dots,k}$ and let $\um=(m_1,m_2,\dots,m_{2i+1})$ be an $2i+1$-tuple of integers, satisfying $1\leqs
m_1<m_2<\cdots<m_{2i+1}\leqs n$. We view them as indices of the rows and columns of $A_k$: we denote by $B_{\um}$
the square submatrix of $A_k$ of size $2i+1$, corresponding to rows and columns $m_1,m_2,\dots,m_{2i+1}$ of~$A_k$,
so that
\begin{equation}\label{eq:B_def}
  (B_{\um})_{s,t}=(A_k)_{m_s,m_t}\;, \hbox{ for } s,t=1,\dots,2i+1\;.
\end{equation}
Let
\begin{equation}\label{eq:S_def}
  \cS{n,k}{i}:=\set{\um\mid B_{\um}=A_i^{(2i+1)}}\;.
\end{equation}
As was pointed out by Bogoyavlenskij, the polynomials $K_i$ which appear in the characteristic polynomial
(\ref{eq:bogo_char_poly}) can be written as
\begin{equation}\label{eq:k_i_itoh}
  K_i=\sum_{\um\in \cS{2k+1,k}{i}} x_{m_1}x_{m_2}\dots x_{m_i}\dots x_{m_{2i+1}}\;.
\end{equation}%
For example, $\cS{2k+1,k}{0}=\set{1,2,\dots,2k+1}$ and $\cS{2k+1,k}{k}=\set{(1,2,\dots,2k+1)}$, so that
$K_0=x_1+x_2+\dots+x_{2k+1}$ and $K_k=x_1x_2\dots x_{2k+1}$, as above.

We use the latter description to give a combinatorial formula for the restrictions of the integrals $K_i$, obtained
by setting the last few variables equal to zero. Suppose that we put the last $\ell\leqs k$ variables
$x_{2k-\ell+2},\ x_{2k-\ell+3},\dots,x_{2k+1}$ equal to zero, which leads us by reduction to
$\LV(2k+1-\ell,k-\ell)$. Consider a first integral~$K_i$ of $\LV(2k+1,k)$, as defined in (\ref{eq:k_i_itoh}). Since
the restriction of $K_i$ to $\LV(2k+1-\ell,k-\ell)$ is obtained by replacing the last $\ell$ variables
$x_{2k-\ell+2},\ x_{2k-\ell+3},\dots,x_{2k+1}$ by~$0$, the sum in (\ref{eq:k_i_itoh}) can be restricted to the
$(2i+1)$-tuplets $\um=(m_1,m_2,\dots,m_{2i+1})$, with $m_{2i+1}\leqs 2k-\ell+1$; thus, we can view these integers now
as the rows and columns of a submatrix of $A_k^{(2k+1)}$ obtained from it by removing from it its last $\ell$ rows
and columns, i.e., as the rows and columns of $A_{k-\ell}^{(2k-\ell+1)}$. For future reference, we state this in the
following proposition.
\begin{prop}\label{prp:itoh_rest}
  Suppose that $1<2k+1<n$. For $i=0,\dots,k$ the polynomial $K_i^{(n,k)}$, defined by
  \begin{equation}\label{eq:k_i_itoh_rest}
    K_i^{(n,k)}:=\sum_{\um\in \cS{n,k}{i}} x_{m_1}x_{m_2}\dots x_{m_i}\dots x_{m_{2i+1}}\;
  \end{equation}%
  is a first integral of $\LV(n,k)$.
\end{prop}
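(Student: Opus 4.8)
The plan is to establish Proposition~\ref{prp:itoh_rest} by relating the matricial description \eqref{eq:k_i_itoh_rest} for general $n$ to the original combinatorial formula \eqref{eq:k_i_itoh} for $\LV(2k+1,k)$, and then using the reduction Proposition~\ref{prp:inclusion} to transfer the first-integral property. The key observation is that the definition of $\cS{n,k}{i}$ in \eqref{eq:S_def} only selects those index tuples $\um$ for which the principal submatrix $B_{\um}$ of $A_k^{(n)}$ equals $A_i^{(2i+1)}$; this condition is intrinsic to the combinatorial pattern of $\pm 1$'s and does not depend on how $A_k$ sits inside a larger ambient matrix.

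First I would treat the case $n=2k+1$ as the base case, which is exactly \eqref{eq:k_i_itoh}: there $K_i^{(2k+1,k)}=K_i$ is a coefficient of the characteristic polynomial \eqref{eq:bogo_char_poly} of the Lax operator $X+\l M$, hence a first integral of $\LV(2k+1,k)$, as recalled in the text. Next I would show that for $n>2k+1$ the polynomial $K_i^{(n,k)}$ defined by \eqref{eq:k_i_itoh_rest} is precisely the restriction of $K_i^{(2n-2k-1,\,n-k-1)}$ — or, what is cleaner, can be obtained by a chain of the reductions of Proposition~\ref{prp:inclusion}. Concretely, starting from $\LV(2k+1,k)$ one cannot go up, so instead I would run the argument the other way: consider the Bogoyavlenskij--Itoh system $\LV(2m+1,m)$ with $m$ large enough (say $m=n-1$, so the ambient size $2m+1=2n-1$ exceeds $n$), whose first integrals $K_i=K_i^{(2m+1,m)}$ are given by \eqref{eq:k_i_itoh}. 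By part c) of Proposition~\ref{prp:inclusion}, setting the last coordinate to zero repeatedly sends $\pin{2m+1}{m}$ to $\pin{2m}{m-1}$, then to $\pin{2m-1}{m-2}$, and so on; after $2m+1-n = 2(n-1)+1-n = n-1$ steps one lands on $\pin{n}{k}$ precisely when $m-(n-1-?)=k$, i.e.\ one must check the bookkeeping so that the sequence of indices $(2m+1,m)\to\cdots$ passes through $(n,k)$ — this requires $m - j = k$ and $2m+1 - j = n$ for the same $j$, forcing $m = n-k-1$ and $j = n-2k-1$, consistent with $n>2k+1$ giving $j>0$. So the correct starting point is $\LV(2(n-k-1)+1,\,n-k-1)=\LV(2n-2k-1,\,n-k-1)$, exactly the system named in the paper's introduction, and after $j=n-2k-1$ reductions (each killing the last surviving variable) one arrives at $\LV(n,k)$.

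Having set this up, the combinatorial heart of the argument is to identify the restriction of the polynomial $K_i$ of $\LV(2n-2k-1,n-k-1)$, under setting the last $n-2k-1$ variables to zero, with the sum \eqref{eq:k_i_itoh_rest}. This is exactly the paragraph of discussion preceding the proposition: restricting $K_i = \sum_{\um\in\cS{2n-2k-1,n-k-1}{i}} x_{m_1}\cdots x_{m_{2i+1}}$ amounts to discarding every monomial containing one of the killed variables, i.e.\ keeping only $\um$ with $m_{2i+1}\leqs n$; and for such $\um$ the submatrix $B_{\um}$ of $A_{n-k-1}^{(2n-2k-1)}$ on rows/columns $\leqs n$ coincides with the corresponding submatrix of $A_k^{(n)}$, because $A_k^{(n)}$ is by \eqref{eq:A_k_def_2} the top-left $n\times n$ block of $A_{n-k-1}^{(2n-2k-1)}$ — one verifies from \eqref{eq:A_k_def_2} that $(A_{n-k-1}^{(2n-2k-1)})_{i,j} = \eps{(2n-2k-1)+i}{(n-k-1)+j}$ and that for $i,j\leqs n$ the sign $\eps{\cdot}{\cdot}$ depends only on the difference of arguments, which equals $(n+i)-(k+j)$ shifted by the constant $n-2k-1$, so the pattern is preserved. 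Hence $\cS{n,k}{i} = \{\um\in\cS{2n-2k-1,n-k-1}{i} : m_{2i+1}\leqs n\}$ and \eqref{eq:k_i_itoh_rest} is genuinely the restriction of $K_i$.

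Finally, I would invoke the last sentence of Proposition~\ref{prp:inclusion}: a Hamiltonian system restricts, along a Poisson submanifold, to a Hamiltonian system with the restricted Hamiltonian, so first integrals restrict to first integrals. Applying this along the chain $\LV(2n-2k-1,n-k-1)\to\cdots\to\LV(n,k)$, the first integral $K_i$ of $\LV(2n-2k-1,n-k-1)$ restricts, step by step, to a first integral of $\LV(n,k)$, and by the combinatorial identification just made this restriction is exactly $K_i^{(n,k)}$ of \eqref{eq:k_i_itoh_rest}. I expect the main obstacle to be purely bookkeeping: getting the index shifts in \eqref{eq:A_k_def_2} and in the submatrix identification exactly right, and being careful that the ambient Bogoyavlenskij--Itoh matrix whose top-left block is $A_k^{(n)}$ is the one of size $2n-2k-1$ (and not some other circulant of the right parity). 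There is no analytic or structural difficulty once the matrix-block compatibility is checked; the argument is a clean composition of reductions plus a sign-pattern verification.
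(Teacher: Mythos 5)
Your proposal is correct and follows essentially the same route as the paper: the paper also obtains $K_i^{(n,k)}$ as the restriction of the Bogoyavlenskij--Itoh integral $K_i$ of $\LV(2n-2k-1,n-k-1)$ under setting the last $n-2k-1$ variables to zero, identifies the surviving index tuples with $\cS{n,k}i$ via the observation that $A_k^{(n)}$ is the corner block of $A_{n-k-1}^{(2n-2k-1)}$, and invokes the Poisson-submanifold reduction of Proposition~\ref{prp:inclusion} to conclude that the restriction is a first integral. Your sign-pattern check (both arguments of $\eps{\cdot}{\cdot}$ shifted by the same constant $n-2k-1$) is exactly the bookkeeping the paper leaves implicit.
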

Notice that $K_i^{(n,k)}$ is homogeneous and has degree $2i+1$. Notice also that when $i>k$ the set $\cS{n,k}{i}$
is empty; said differently, when $i>k$ the restriction of $K_i$ to $\LV(n,k)$ is zero. We will see below that the
polynomials $K_0^{(n,k)}=H,K_1^{(n,k)},\dots, K_k^{(n,k)}$ are actually functionally independent, in particular
they are not trivial.

Since the polynomials $K_i^{(n,k)}$ are defined in terms of the sets $\cS{n,k}{i}$, we need a characterization of
the elements of the latter sets. It is given in the following proposition.
\begin{prop}\label{lma:S}
  Suppose that $n\geqs2k+1$ and let $\um=(m_1,\dots,m_{2i+1})$ be a strictly ordered $2i+1$-tuplet of elements of
  $\set{1,2,\dots,n}$. Then $\um \in \cS{n,k}{i}$ if and only if the following conditions are satisfied:
  \begin{enumerate}
    \item[(1)] $m_{i+s}<m_s+n-k\leqs m_{i+s+1}$ for $s=1,\dots,i$;
    \item[(2)] $m_{2i+1}<m_{i+1}+n-k$.
  \end{enumerate}
\end{prop}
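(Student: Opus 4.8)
The plan is to unravel the definition~\eqref{eq:S_def} of $\cS{n,k}{i}$ directly: the condition $\um\in\cS{n,k}{i}$ says that the submatrix $B_{\um}$ of $A_k^{(n)}$ indexed by the rows and columns $m_1<\dots<m_{2i+1}$ coincides, entry by entry, with the matrix $A_i^{(2i+1)}$. Using the closed formula~\eqref{eq:A_k_def_2}, namely $(A_k^{(n)})_{a,b}=\eps{n+a}{k+b}$ for $a<b$, this becomes an explicit system of sign conditions indexed by pairs $1\leqs s<t\leqs 2i+1$: we must have $\eps{n+m_s}{k+m_t}=\eps{2i+1+s}{i+t}$, i.e. the inequality $n+m_s>k+m_t$ must hold if and only if $2i+1+s>i+t$, that is, if and only if $t-s\leqs i$. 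Equivalently, for all $s<t$,
$$
m_t-m_s<n-k \iff t-s\leqs i\;.
$$
So the whole content of the proposition is to show that this family of biconditionals is equivalent to the two compact conditions (1) and (2).

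First I would reformulate the biconditional as the conjunction of two implications. The ``$\Leftarrow$'' direction ($t-s\leqs i\Rightarrow m_t-m_s<n-k$) is, since the $m_a$ are increasing, equivalent to the single ``worst case'' inequality $m_{s+i}-m_s<n-k$ for every $s$ with $1\leqs s\leqs 2i+1-i=i+1$; that is the left-hand inequality of condition (1) for $s=1,\dots,i$ together with condition (2) (which is exactly the case $s=i+1$). The ``$\Rightarrow$'' direction ($t-s\geqs i+1\Rightarrow m_t-m_s\geqs n-k$) is likewise, by monotonicity, equivalent to the ``best case'' inequality $m_{s+i+1}-m_s\geqs n-k$ for every $s$ with $1\leqs s\leqs i$, i.e. $m_s+n-k\leqs m_{i+s+1}$, which is precisely the right-hand inequality of condition (1). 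Thus the collection of all the entrywise sign conditions is literally equivalent to (1) together with (2), and there is nothing more to prove. I would spell out the two monotonicity reductions carefully, since that is the only place where a small argument (rather than mere bookkeeping) is needed: to see that the family $\{m_t-m_s<n-k : t-s\leqs i\}$ follows from its subfamily with $t-s=i$, note that if $t-s\leqs i$ and $s+i\leqs 2i+1$ then $m_t\leqs m_{s+i}$, while the edge cases $s+i>2i+1$ correspond to $t-s<i$ and are then subsumed by a shift of $s$; and symmetrically for the other direction.

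The only genuine subtlety to watch is the treatment of the diagonal ($s=t$, giving $-1$ on both sides, which is automatic) and the boundary indices where $s+i$ or $s+i+1$ would exceed $2i+1$, so that the corresponding inequality in (1) or (2) simply does not appear; I would remark that in those ranges the biconditional $m_t-m_s<n-k\iff t-s\leqs i$ is vacuous in the relevant direction precisely because no such pair $(s,t)$ with $1\leqs s<t\leqs 2i+1$ exists. I expect this indexing bookkeeping — matching the range $s=1,\dots,i$ in (1), the extra case packaged as (2), and the absence of constraints beyond $t=2i+1$ — to be the main (entirely routine) obstacle; the mathematical heart, the equivalence of a monotone family of inequalities with its extremal members, is immediate. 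Once this is done, Proposition~\ref{prp:itoh_rest} gives that $K_i^{(n,k)}$ is the restriction of the Lax-spectral invariant $K_i$, hence a first integral, and the combinatorial conditions (1)--(2) are exactly what will be needed in the subsequent functional-independence arguments.
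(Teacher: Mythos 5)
Your proposal is correct and follows essentially the same route as the paper's proof: translate $B_{\um}=A_i^{(2i+1)}$ entrywise via \eqref{eq:A_k_def_2} into the biconditional $m_t-m_s<n-k \iff t-s\leqs i$, and then use the strict monotonicity of $\um$ to reduce to the extremal cases $t-s=i$ (giving the left inequality of (1) and, for $s=i+1$, condition (2)) and $t-s=i+1$ (giving the right inequality of (1)). The bookkeeping you flag about boundary indices is exactly the point the paper also dispatches by appealing to monotonicity, so nothing further is needed.
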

\begin{proof}
Suppose that $\um=(m_1,\dots,m_{2i+1})$ with $1\leqs m_1<m_2<\dots<m_{2i+1}\leqs n$. In
view of the definitions (\ref{eq:B_def}) and (\ref{eq:S_def}) of $B_{\um}$ and $\cS{n,k}{i}$, we have that
$\um\in \cS{n,k}{i}$ if and only if
\begin{equation*}
  \(A_k^{(n)}\)_{m_s,m_t}=\(A_i^{(2i+1)}\)_{s,t}\;, \quad\mbox{ for }\quad 1\leqs s,t\leqs 2i+1\;.
\end{equation*}%
Using (\ref{eq:A_k_def_2}) this condition can be translated into
\begin{align*}
  &n+m_s>k+m_t\quad\mbox{ when }\quad i+1+s>t\;,\\
  &n+m_s\leqs k+m_t\quad\mbox{ when }\quad i+1+s\leqs t\;,
\end{align*}
which is equivalent to
\begin{align}
  &n+m_s>k+m_t\quad\mbox{ when }\quad i+s=t\;,\label{eq:S_ineqs_2a}\\
  &n+m_s\leqs k+m_t\quad\mbox{ when }\quad i+1+s=t\;.\label{eq:S_ineqs_2b}
\end{align}
The latter equivalence is a direct consequence of the fact that $\um$ is strictly increasing, i.e., $m_s<m_t$ when
$s<t$. The conditions (\ref{eq:S_ineqs_2a}) and (\ref{eq:S_ineqs_2b}) yield for $s=1,\dots,i$ precisely item (1),
while item (2) is obtained by taking $s=i+1$ in (\ref{eq:S_ineqs_2a}), which is the only remaining possible value
for $s$ in (\ref{eq:S_ineqs_2a}) and (\ref{eq:S_ineqs_2b}) such that $1\leqs i,j\leqs 2i+1$.
\end{proof}
We list a few properties of the elements $\um$ of $\cS{n,k}{i}$ which are direct consequences of Proposition
\ref{lma:S}.
\begin{cor}\label{cor:S}
  Let  $n$ and $k$ be integers with $n\geqs2k+1$. Suppose that $\um=(m_1,m_2,\dots,m_{2i+1})\in\cS{n,k}{i}$ and denote
  by $\um'$ the vector $\um$ with its middle entry $m_{i+1}$ replaced by $m'_{i+1}$.
  \begin{enumerate}
    \item[(1)] $m_i\leqs k$;
    \item[(2)] $m_{i+2}> n-k\geqs k+1$;
    \item[(3)] $\um'\in\cS{n,k}{i}$ if and only if $m_{2i+1}-n+k<m'_{i+1}<m_1+n-k$;
    \item[(4)] $\um'\in\cS{n,k}{i}$ when  $k<m'_{i+1}<n-k+1$.
  \end{enumerate}
\end{cor}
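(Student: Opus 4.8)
The plan is to derive Corollary~\ref{cor:S} directly from the two inequalities of Proposition~\ref{lma:S}, specializing the running index~$s$ to its extreme admissible values. Throughout, recall that $\um$ is strictly increasing, so $m_s<m_t$ whenever $s<t$, and that $n\geqs 2k+1$ gives $n-k\geqs k+1$.

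First I would prove (1) and (2). For (1), take $s=1$ in the left inequality of item~(1) of Proposition~\ref{lma:S}, namely $m_{i+1}<m_1+n-k$; combined with the right inequality $m_1+n-k\leqs m_{i+2}$ (the $s=1$ case again, or rather the chain for $s=1$), this places $m_1+n-k$ strictly between $m_{i+1}$ and $m_{i+2}$. Actually the cleanest route: item~(1) with $s=i$ reads $m_{2i}<m_i+n-k\leqs m_{2i+1}$, and since $m_{2i+1}\leqs n$ we get $m_i+n-k\leqs n$, i.e.\ $m_i\leqs k$. For (2), item~(1) with $s=1$ gives $m_1+n-k\leqs m_{i+2}$, and since $m_1\geqs1$ this yields $m_{i+2}\geqs n-k+1>n-k$; the bound $n-k\geqs k+1$ is just $n\geqs 2k+1$. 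So (1) and (2) are immediate once one picks the right value of~$s$.

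Next I would handle (3), the characterization of when the middle entry may be moved. Replacing $m_{i+1}$ by $m'_{i+1}$ affects, among the defining inequalities, only those in which the index $i+1$ occurs as $m_s$ or $m_t$; scanning items~(1) and~(2) of Proposition~\ref{lma:S}, the index $i+1$ appears exactly in: item~(1) with $s=1$ on the right, giving $m_1+n-k\leqs m'_{i+1}$ — wait, rather $m_1+n-k\leqs m_{i+1+1}$ has $i+1+1$, let me be careful — the occurrences are item~(1) at $s=1$ giving $m_{i+1}<m_1+n-k$, and item~(2) giving $m_{2i+1}<m_{i+1}+n-k$. All other inequalities in the system are untouched by the substitution (one must also check that strict monotonicity $m_i<m'_{i+1}<m_{i+2}$ is implied; this follows since $m_{2i+1}-n+k<m_i'$ together with $m_{i}\leqs k$ and $m_{i+2}>n-k$, or more directly from (4)). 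Hence $\um'\in\cS{n,k}{i}$ iff $m_{2i+1}-n+k<m'_{i+1}$ and $m'_{i+1}<m_1+n-k$, which is exactly (3). Finally (4): if $k<m'_{i+1}<n-k+1$ then $m'_{i+1}>k\geqs m_i$ gives $m'_{i+1}-n+k>m_i+k-n$... more to the point, $m_{2i+1}\leqs n$ so $m_{2i+1}-n+k\leqs k<m'_{i+1}$, and $m_1\geqs1$ so $m_1+n-k\geqs n-k+1>m'_{i+1}$; thus the condition of (3) holds and $\um'\in\cS{n,k}{i}$.

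The main obstacle, such as it is, is purely bookkeeping: one must verify that no \emph{other} inequality in the system of Proposition~\ref{lma:S} involves the modified index, and in particular that strict monotonicity of $\um'$ is automatic under the stated bounds rather than an extra hypothesis. This is not deep but requires care in tracking which of the $2i+1$ entries each inequality constrains. Everything else is a one-line substitution of a boundary value of $s$ into the already-proved inequalities of Proposition~\ref{lma:S}, using only $n\geqs 2k+1$, $m_1\geqs1$ and $m_{2i+1}\leqs n$.
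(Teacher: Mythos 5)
Your proof is correct and follows essentially the same route as the paper: items (1) and (2) come from the $s=i$ and $s=1$ cases of Proposition~\ref{lma:S}, item (3) from observing that only the $s=1$ inequality and condition (2) involve $m_{i+1}$, and item (4) from $m_{2i+1}\leqs n$ and $m_1\geqs 1$ via (3). Your extra remark on strict monotonicity of $\um'$ is a sensible addition (the paper leaves it implicit), but justify it directly from the same two inequalities, $m_i\leqs m_{2i+1}-n+k<m'_{i+1}$ and $m'_{i+1}<m_1+n-k\leqs m_{i+2}$, rather than by appeal to (4), which is itself a special case of (3).
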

\begin{proof}
If we take $s:=i$ in Proposition (\ref{lma:S}) (1), we find $m_i\leqs m_{2i+1}-n+k\leqs k$, which is item (1). The
first part of item (2) is obtained similarly by taking $s:=1$ in the same inequality; the second part of item (2)
follows from $n\geqs2k+1$. When we replace $m_{i+1}$ by $m'_{i+1}$ the only inequalities in Proposition (\ref{lma:S})
which get affected are (1) with $s:=1$ and (2); they become precisely the two inequalities in item (3). If
$k<m'_{i+1}<n-k+1$ then $m_{2i+1}-n+k<m'_{i+1}<m_1+n-k$, so item (4) is a consequence of item (3).
\end{proof}
One more property of the elements $\um$ of $\cS{n,k}{i}$ is given in the following example.
\begin{example}
\label{example_bogo_case_S_i,j_same_cardinality}
Let $n\geq 2k+1$ and suppose that $\um=(m_1,m_2,\ldots,m_{2i+1})\in\cS{n,k}{i}$ with $m_{2i+1}<n$.
Then from the conditions given in Proposition \ref{lma:S} it easily follows that
$\um'=(m_1+1,m_2+1,\ldots,m_{2i+1}+1)$ also belongs to $\cS{n,k}{i}$.
In the case $n=2k+1$, if $\um=(m_1,m_2,\ldots,m_{2i+1})\in\cS{n,k}{i}$ with $m_{2i+1}=n$ then
$\um':=(m_1',m_2',\ldots,m_{2i+1}')=(1,m_1+1,m_2+1,\ldots,m_{2i}+1)$ belongs to $\cS{n,k}{i}$.
Indeed the only condition needed to be checked is the $m_{i+1}'<m_1'+n-k\leqs m_{i+2}'$ which
translates to $m_{i}<k+1\leqs m_{i+1}$. This follows from Corollary \ref{cor:S} (items (1) and (4)).
This shows that the cyclic group of $n$ elements acting on $\mathbb R^n$ by permuting the variables,
leaves the first integrals $K_i^{(2k+1,k)}$ invariant.
\end{example}

\section{Independence of the first integrals}
We have constructed in the previous section $n-k-1$ (polynomial and rational) first integrals for
$\LV(n,k)$, where $n>2k+1$. We prove now the following result, concerning the independence of these first
integrals.
\begin{prop}\label{prp:independence}
  The $n-k-1$ first integrals $H^{(n,k)}_1,\dots,H^{(n,k)}_{n-k-2},$ $K_0^{(n,k)},$ $K_1^{(n,k)},$ $\dots,$ $K_k^{(n,k)}$ of
  $\LV(n,k)$ are functionally independent.
\end{prop}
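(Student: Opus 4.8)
The plan is to prove functional independence by evaluating the Jacobian of the full collection of first integrals at a carefully chosen point (or along a carefully chosen curve), and showing that the resulting matrix has the expected rank $n-k-1$. The two families of integrals have very different structure — the $K_i^{(n,k)}$ are homogeneous polynomials of odd degree $2i+1$, while the $H_\ell^{(n,k)}$ are rational and each carries the common "frame factor" $P_k=x_1\cdots x_k x_{n-k+1}\cdots x_n$ together with a Laurent monomial in the middle variables $x_{k+1},\dots,x_{n-k}$. I would exploit this dichotomy: restrict attention to a locus where the behaviour of the two families decouples as much as possible.

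\textbf{Step 1: reduce to the middle variables.} Since every $H_\ell^{(n,k)}=\hat H_\ell^{(n,k)}\cdot(\text{partial sum of middle }x_i)$, and the factor $\hat H_\ell^{(n,k)}$ (hence $P_k$) is common, independence of the $H_\ell^{(n,k)}$ among themselves already follows from the independence of the rational first integrals $F_1,\dots,F_{r-1},G_1,\dots,G_{r-1}$ of $\LV(n-2k,0)$ proved in \cite{KKQTV}, pulled back by the Poisson (in particular, submersive on a dense open set) map $\phi_k$; so the $n-2k-2$ functions $H_1^{(n,k)},\dots,H_{n-2k-2}^{(n,k)}$ are independent. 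What remains is to show that adjoining $K_0^{(n,k)},\dots,K_k^{(n,k)}$ raises the rank by exactly $k+1$, i.e. that these $k+1$ polynomials are independent modulo the subalgebra generated by the $H_\ell^{(n,k)}$'s (and each other).

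\textbf{Step 2: choose a good point and a filtration of variables.} I would look for a point $p$ at which $x_{k+1}=\dots=x_{n-k}$ are free but the outer variables $x_1,\dots,x_k,x_{n-k+1},\dots,x_n$ are specialized (say, all equal to $1$, or tending to a limit), so that the rational integrals $H_\ell^{(n,k)}$ depend, to leading order, only on the middle variables, while the polynomials $K_i^{(n,k)}$ pick up genuinely new directions along the outer variables. Concretely, using the combinatorial description (Proposition \ref{prp:itoh_rest}) together with Corollary \ref{cor:S} (items (1)–(2): the indices left of the middle are $\leq k$, those right of the middle are $>n-k$), one sees that $\partial K_i^{(n,k)}/\partial x_s$ for $s\in\{1,\dots,k\}$ is controlled by tuples $\um$ whose first $i$ entries are forced into $\{1,\dots,k\}$. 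I would order the outer variables and read off, for the matrix of partials $(\partial K_i^{(n,k)}/\partial x_s)_{0\le i\le k,\ s}$, a triangular (or block-triangular) pattern: $K_i^{(n,k)}$ is the first integral in the list whose gradient has a nonzero component in a suitable "new" direction $x_{s_i}$, because its degree $2i+1$ and the pigeonhole constraints of $\cS{n,k}{i}$ prevent lower $K_j$'s from reaching that direction. The homogeneity/degree grading ($\deg K_i=2i+1$ strictly increasing) is the clean device that makes this triangularity work.

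\textbf{Step 3: assemble the Jacobian.} Writing the Jacobian of the full list in block form with rows split into the $H_\ell$-block and the $K_i$-block, and columns split into "middle" and "outer" variables, Step 1 handles the middle–$H$ block (full row rank $n-2k-2$), and Step 2 handles the outer–$K$ block (full row rank $k+1$ by the triangular argument). The off-diagonal coupling (how the $K_i$ vary in the middle variables, and how the $H_\ell$ vary in the outer variables) is harmless once the point $p$ is chosen so that the outer–$K$ block is already nonsingular: a block matrix $\begin{pmatrix} A & B \\ C & D\end{pmatrix}$ with $A$ of full row rank and $D$ square nonsingular has full row rank, so evaluating at such a $p$ gives rank $(n-2k-2)+(k+1)=n-k-1$, as claimed. \textbf{The main obstacle} I anticipate is Step 2: one must verify, using only the explicit combinatorics of $\cS{n,k}{i}$ from Proposition \ref{lma:S} and Corollary \ref{cor:S}, that for each $i$ there genuinely is a monomial in $K_i^{(n,k)}$ whose derivative in some outer variable is not already "used up" by $K_0,\dots,K_{i-1}$ — in other words, that the triangular pattern is strict and does not degenerate at the chosen point (in particular one must check $K_k^{(n,k)}$, whose leading behaviour is the product of outer and all-but-one middle variables, truly contributes an independent direction). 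Keeping the chosen point off the zero loci of all the relevant Laurent monomials (so the rational $H_\ell$ and their derivatives are defined and the Casimir-type denominators don't vanish) is a routine but necessary bookkeeping point. The degree filtration and the outer-vs-middle index separation from Corollary \ref{cor:S} are exactly the tools that should make this obstacle tractable.
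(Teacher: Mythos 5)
Your overall architecture coincides with the paper's: evaluate the Jacobian at a special point, split the variables into the ``outer'' ones $x_1,\dots,x_k$ and the ``middle'' ones $x_{k+1},\dots,x_{n-k}$, let the pulled-back rational integrals account for the middle block (via the full-rank result of \cite{KKQTV} and the submersivity of $\phi_k$), and let the $K_i^{(n,k)}$ account for the outer block. (The paper works at the point $\mathbf 1$ and achieves the decoupling you describe in Step 3 more cleanly by replacing each integral with $H_\ell^{(n,k)}-p_\ell H$ and $K_i^{(n,k)}-q_iH$ for suitable constants, which kills the off-diagonal blocks exactly rather than invoking a block-rank inequality.) Your Step 1 and Step 3 are therefore sound and essentially identical to the paper's.

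The genuine gap is Step 2, which you yourself flag as ``the main obstacle'' but do not resolve, and the mechanism you propose for resolving it does not work. The claimed triangular pattern driven by the degree grading $\deg K_i=2i+1$ does not exist: by Proposition \ref{lma:S} and Corollary \ref{cor:S}, \emph{every} $K_i^{(n,k)}$ with $1\leq i\leq k$ contains monomials involving \emph{every} outer variable $x_j$, $1\leq j\leq k$ (the first $i$ entries of any $\um\in\cS{n,k}{i}$ range over all of $\set{1,\dots,k}$ as $\um$ varies), so at a symmetric point such as $\mathbf 1$ the matrix $\bigl(\partial K_i^{(n,k)}/\partial x_j\bigr)_{1\leq i,j\leq k}$ has all entries equal to the positive integers $\#\cS{n,k}{i,j}$ and is nowhere near triangular; there is no ``new direction $x_{s_i}$'' reached by $K_i$ and not by $K_0,\dots,K_{i-1}$. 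Nonsingularity of this block is the real technical heart of the proposition, and the paper spends most of its proof on it: one must compute the differences $\#\cS{n,k}{i,j}-\#\cS{n,k}{i,j+1}$ exactly (via the counts $\s^{(k)}_{i,j}$, the two recurrence relations of Lemma \ref{lem:recur_rel}, and the closed-form binomial expression of item (6) of Proposition \ref{prp:indep_tech}) to establish that row $i$ is weakly decreasing in $j$ with \emph{exactly} $i$ equal middle entries, a staircase structure which then yields nonsingularity after row reduction. Moreover, even granting your weaker claim that each $K_i$ has ``a monomial not used up by the lower $K_j$'s,'' that would not by itself imply linear independence of the gradients at the chosen point, since independence is a statement about the numerical values of the partial derivatives, not about the supports of the polynomials. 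As written, the proposal does not prove the statement.
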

The proof of this proposition is quite long and technical; it will take up this whole section and can be
skipped on a first reading, as the rest of the paper only depends on the statement of the above proposition, and
not on its proof.

We only need to show that the differentials of the above first integrals are independent at some point of $\bbR^n$:
since these functions are polynomial or rational, their differentials will then be independent on an open dense
subset of $\bbR^n$, proving their functional independence. To do this, we show that the Jacobian matrix of these
first integrals with respect to the $n-k$ variables $x_1,\dots,x_{n-k}$ is of maximal rank ($n-k-1$) at the point
${\bf~1}=(1,1,\dots,1)$ of $\bbR^n$.  More precisely, we show that there exist constants $p_\ell$ and $q_i$ (with
$\ell=1,2,\ldots,n-2k-2$ and $i=1,2,\ldots,k$) such that the Jacobian matrix at $\bf 1$ of the following functions
(which are the above first integrals, shifted by a multiple of the Hamiltonian $H=K_0^{(n,k)}$),
\begin{gather}
  H^{(n,k)}_\ell-p_\ell H \text{ for } \ell=1,2,\ldots,n-2k-2\;,\nonumber\\
  H=K_0^{(n,k)}\;,\label{eq:integrals_for_jac}\\
  K^{(n,k)}_i-q_i H \text{ for }i=1,2,\ldots,k\;,\nonumber
\end{gather}
has the following form
\begin{equation}\label{eq:jac_at_1}
\begin{pmatrix}
  {\bf 0}_{n-2k-2,k}&\Phi_{n-2k-2,n-2k}\\
  {\bf 1}_{1,k}&{\bf 1}_{1,n-2k}\\
  \Lambda_{k,k}&{\bf 0}_{k,n-2k}
\end{pmatrix}\;,
\end{equation}
and is of maximal rank; in this block matrix, the subscripts denote the dimension of the different blocks. Also,
the matrices ${\bf 1}$ and $\bf 0$ have all entries equal to 1, respectively to 0.

We first prove the existence of the constants $p_\ell$ and $q_i$. When $n$ is odd, it follows from (\ref{eq:H_odd})
that
\begin{equation}\label{proposition_partials_of_H_i}
  \frac{\partial H_\ell^{(n,k)}}{\partial x_j}({\bf 1})=2\ell-1\;,\quad\hbox{ for }\quad j=1,\dots,k\;,
\end{equation}
and so, since $\pp H{x_j}=1$, it suffices to define $p_\ell:=2 \ell-1$ for $\ell=1,2,\ldots,n-2k-2$ to obtain the
upper left block of zeroes in (\ref{eq:jac_at_1}). Similarly, when $n$ is even, $p_\ell:=2\ell$ does the job. Also,
it follows from (4) in Corollary \ref{cor:S} that the number of monomials in $K_i^{(n,k)}$ containing $x_j$ does
not depend on $j$ when $k<j<n-k+1$; their number is the number $q_i$ needed to obtain the lower right block of
zeroes in (\ref{eq:jac_at_1}) since all these monomials have a coefficient $1$, and so $\pp{K_i^{(n,k)}}{x_j}=q_i$.

It remains to be shown that the matrix (\ref{eq:jac_at_1}) has maximal rank. It is shown in~\cite{KKQTV} that the
Jacobian matrix
$$
  \frac{\partial (H_1^{(n-2k,0)},\ldots,H_{n-2k-1}^{(n-2k,0)})}{\partial (y_1,y_2,\ldots,y_{n-2k})}({\bf 1})
$$
is of full rank ($n-2k-1$). Since $H_\ell^{(n,k)}=H_\ell^{(n-2k,0)}\circ\phi_k$, we have
\begin{equation*}
  \pp{H_\ell^{(n,k)}}{x_j}({\bf 1})=  \pp{H_\ell^{(n-2k,0)}}{y_{j-k}}({\bf 1})\;,
  \quad\hbox{ for }\quad j=k+1,\dots,n-k\;,
\end{equation*}%
and so the rank of the Jacobian matrix
\begin{equation}\label{eq:jac1}
  \frac{\partial (H_1^{(n,k)},\ldots,H_{n-2k-1}^{(n,k)})}{\partial (x_{k+1},x_{k+2},\ldots,x_{n-k})}({\bf 1})
\end{equation}
is also maximal. Now $\Phi_{n-2k-2,n-2k}$ is given by
\begin{equation*}
    \Phi_{n-2k-2,n-2k}=\frac{\partial (H_1^{(n,k)}-p_1H,\ldots,H_{n-2k-2}^{(n,k)}-p_{n-2k-2}H)}
  {\partial (x_{k+1},x_{k+2},\ldots,x_{n-k})}({\bf 1})\;,
\end{equation*}%
and all entries below it (in (\ref{eq:jac_at_1})) are equal to 1. It follows that the matrices
(\ref{eq:jac1}) and $\begin{pmatrix}\Phi_{n-2k-2,n-2k}\\ {\bf 1}_{1,n-2k}\end{pmatrix}$ coincide, up to some row
operations; in particular, they have maximal rank.

We still need to show that $\Lambda_{k,k}$ also has maximal rank. For the proof, we need several notations and
relations which are of combinatorial nature. We first introduce the notation that we will use. First, we denote by
$\cK$ or $\cK^{(n,k)}$ the Jacobian matrix
\begin{equation}\label{eq:jaco}
  \cK^{(n,k)}:=\frac{\p(K_1^{(n,k)},\dots,K_k^{(n,k)})}{\p(x_1,\dots,x_k)}({\bf 1})\;,\hbox{ so that
  }\cK^{(n,k)}_{i,j}=\pp{K_i^{(n,k)}}{x_j}({\bf1})\;.
\end{equation}%
For $\um=(m_1,m_2,\dots,m_{2i+1})$ we denote by $\hum$ the vector $\um$ with its middle element removed,
\begin{equation*}
  \hum=(m_1,m_2,\dots,m_i,m_{i+2},\dots,m_{2i+1})\;.
\end{equation*}%
We deduce from the definition (\ref{eq:S_def}) of $\cS{n,k}{i}$ three related sets
\begin{gather*}
  \cS{n,k}{i,j}:=\set{\um\in\cS{n,k}{i} \mid j\in\set{m_1,\dots,m_{i+1}}}\;,\\
  \hcS{n,k}{i}:=\set{\hum\mid\um\in\cS{n,k}{i}}\;,\quad
  \hcS{n,k}{i,j}:=\set{\hum\in\hcS{n,k}{i} \mid j\in\set{m_1,\dots,m_i}}\;,
\end{gather*}%
where $j=1,\dots,k$. Finally, we put
\begin{equation}\label{eq:s_def}
  \s^{(k)}_{i,j}:=\#\hcS{2k+1,k}{i,j}\;.
\end{equation}%
In the following proposition we relate the entries of the matrix $\Lambda_{k,k}$ with those of $\cK$ and with the
numbers $\s^{(k)}_{i,j}$ for which we give a formula; combining these relations, we will prove that $\Lambda_{k,k}$
is of maximal rank.
\begin{prop}\label{prp:indep_tech}
  Let $n,k$ be such that $n>2k+1$ and let $i,j\in\set{1,2,\dots,k}$.
  \begin{enumerate}
    \item[(1)] The entries of $\cK$ are given by
      \begin{equation*}
        \cK^{(n,k)}_{i,j}=\#\cS{n,k}{i,j}\;;
      \end{equation*}%
    \item[(2)] The entries of $\Lambda_{k,k}$ and of $\cK$ are related by
      \begin{equation*}
        (\Lambda_{k,k})_{i,j}=\cK^{(n,k)}_{i,j}-q_i\;;
      \end{equation*}%
    \item[(3)] The assignment $\um\mapsto \um'$, where $m'_s:=m_s$ for $s=1,\dots,i+1$ and $m'_s:=m_s+1$ for
      $s=i+2,\dots,2i+1$ defines a map $\rho:\cS{n-1,k}i\to\cS{n,k}i$;
    \item[(4)] The map $\rho$ induces bijections
      $\hat\rho:\hcS{n-1,k}i\to\hcS{n,k}i$ and $\hat\rho_j:\hcS{n-1,k}{i,j}\to\hcS{n,k}{i,j}$ for $j=1,\dots,k$. In
      particular,
      \begin{equation*}
        \#\hcS{n,k}{i,j}=\s^{(k)}_{i,j}\;\quad\hbox{ for all } n\geq2k+1\;;
      \end{equation*}%
    \item[(5)] The entries of $\cK$ and the numbers $\s^{(k)}_{i,j}$ are related by
      \begin{equation*}
        \cK^{(n,k)}_{i,j}-\cK^{(n-1,k)}_{i,j}=\s^{(k)}_{i,j}\;;
      \end{equation*}%
    \item[(6)] For $j<k$,
      \begin{equation}\label{eq:rec_for_sigma}
        \s^{(k)}_{i,j}-\s^{(k)}_{i,j+1}=\frac1{(2i-2)!}\prod_{s=1-i}^{i-2}(2j-k+s)\;,
      \end{equation}%
      where the right hand side is, by definition, equal to 1 when $i=1$.
  \end{enumerate}
\end{prop}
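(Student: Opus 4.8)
The plan is to prove the six items essentially in the order they are stated, since each one builds on the previous ones. For item (1), I would start from the definition $K_i^{(n,k)}=\sum_{\um\in\cS{n,k}{i}}x_{m_1}\cdots\widehat{x_{m_{i+1}}}\cdots x_{m_{2i+1}}$ (the product of all $x_{m_s}$ except the middle one) and differentiate with respect to $x_j$. Each monomial involves exactly the $2i$ variables indexed by $\hum=(m_1,\dots,m_i,m_{i+2},\dots,m_{2i+1})$, so at the point $\mathbf 1$ its $x_j$-partial is $1$ if $j\in\hum$ and $0$ otherwise; hence $\cK^{(n,k)}_{i,j}=\#\{\um\in\cS{n,k}{i}\mid j\in\hum\}$. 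Since $j\leqs k$ and, by Corollary~\ref{cor:S}(2), $m_{i+2}>n-k\geqs k+1$, the condition $j\in\hum$ is equivalent to $j\in\{m_1,\dots,m_i\}$, and since $m_{i+1}>m_i\geqs j$ is automatic when $j$ is among the first $i$ entries, this is the same as $j\in\{m_1,\dots,m_{i+1}\}$; that is exactly $\#\cS{n,k}{i,j}$. Item (2) is then immediate: the $(i,j)$-entry of $\Lambda_{k,k}$ is $\partial(K_i^{(n,k)}-q_iH)/\partial x_j(\mathbf 1)=\cK^{(n,k)}_{i,j}-q_i$, using $\partial H/\partial x_j=1$.

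For items (3) and (4), the key point is Proposition~\ref{lma:S}: I would simply check that the inequalities defining $\cS{n-1,k}i$ are transformed into those defining $\cS{n,k}i$ under the map $\rho$ that fixes $m_1,\dots,m_{i+1}$ and adds $1$ to $m_{i+2},\dots,m_{2i+1}$. Writing $m'_s$ for the image, one has $m'_s=m_s$ for $s\leqs i+1$ and $m'_s=m_s+1$ for $s\geqs i+2$, and $n$ has increased by $1$ as well; a term-by-term inspection of conditions (1) and (2) in Proposition~\ref{lma:S} shows each is preserved (the ``$+n-k$'' shifts absorb the ``$+1$'' on exactly the right entries). This gives $\rho:\cS{n-1,k}i\to\cS{n,k}i$. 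Removing the middle entry $m_{i+1}=m'_{i+1}$ commutes with $\rho$ (the middle entry is untouched), so $\rho$ descends to $\hat\rho:\hcS{n-1,k}i\to\hcS{n,k}i$; on the level of $\hum$ the map just adds $1$ to the last $i$ entries while fixing the first $i$, which is manifestly injective, and surjectivity follows because an element $\hum\in\hcS{n,k}i$ comes from some $\um\in\cS{n,k}i$ whose last entry $m_{2i+1}$ must be $>n-k>1$, so it has a preimage. Since $\hat\rho$ fixes the first $i$ coordinates it restricts to a bijection $\hcS{n-1,k}{i,j}\to\hcS{n,k}{i,j}$ for each $j\leqs k$; iterating from $n=2k+1$ gives $\#\hcS{n,k}{i,j}=\s^{(k)}_{i,j}$ for all $n\geqs 2k+1$.

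Item (5) is a counting identity that packages the previous steps. From item (1), $\cK^{(n,k)}_{i,j}=\#\cS{n,k}{i,j}$, and I claim $\cS{n,k}{i,j}$ decomposes according to whether $n\in\um$. If $n\notin\um$ then $m_{2i+1}\leqs n-1$ and $\um$ may be regarded as an element of $\cS{n-1,k}{i,j}$ — one checks directly from Proposition~\ref{lma:S} that the defining inequalities for ``size $n$'' and ``size $n-1$'' agree for tuples with $m_{2i+1}<n$ (only the $s=i+1$ inequality in condition~(2), $m_{2i+1}<m_{i+1}+n-k$, could differ, and it is weaker for larger $n$, so it is automatically satisfied once it holds for $n-1$); conversely any element of $\cS{n-1,k}{i,j}$ sits inside $\cS{n,k}{i,j}$. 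This gives a bijection between $\{\um\in\cS{n,k}{i,j}\mid n\notin\um\}$ and $\cS{n-1,k}{i,j}$. If $n\in\um$, then necessarily $m_{2i+1}=n$, and removing it leaves precisely the data of an element of $\hcS{n,k}{i,j}$ together with the recorded value $m_{2i+1}=n$ — more carefully, the map $\um\mapsto\hum$ restricted to $\{\um\in\cS{n,k}{i,j}\mid m_{2i+1}=n\}$ is a bijection onto $\hcS{n,k}{i,j}$ (given $\hum$, the middle entry $m_{i+1}$ is then forced to lie in the nonempty range from Corollary~\ref{cor:S}(3), but when $m_{2i+1}=n$ that range is $k<m_{i+1}<m_1+n-k$ and, using $m_i\leqs k$ from Corollary~\ref{cor:S}(1), there is exactly... ) — here I expect the subtlety: one must confirm that for $m_{2i+1}=n$ the middle entry is \emph{uniquely} determined, or else account for the multiplicity. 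In fact $m_{i+1}$ ranges over an interval, so the correct statement is that forgetting $m_{i+1}$ gives a bijection onto $\hcS{n,k}{i,j}$ only after one checks that every $\hum\in\hcS{n,k}{i,j}$ extends to a tuple with $m_{2i+1}=n$ in exactly one way that still satisfies the conditions — this requires the condition $m_{i+1}<m_1+n-k$ together with $m_{i+1}\geqs n-k$ (forced when $m_{i+2}>n-k$ and... ) to pin $m_{i+1}$ down, or alternatively one re-sums and the interval lengths telescope. Modulo getting this bookkeeping right, $\#\cS{n,k}{i,j}=\#\cS{n-1,k}{i,j}+\#\hcS{n,k}{i,j}=\cK^{(n-1,k)}_{i,j}+\s^{(k)}_{i,j}$, which is item (5). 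Finally, item (6) is a purely combinatorial count of $\hcS{2k+1,k}{i,j}$: an element $\hum=(m_1<\dots<m_i<m_{i+2}<\dots<m_{2i+1})$ with $j\in\{m_1,\dots,m_i\}$ must, by Corollary~\ref{cor:S}, have $m_1,\dots,m_i$ among $\{1,\dots,k\}$ and $m_{i+2},\dots,m_{2i+1}$ among $\{k+1,\dots,2k+1\}$, subject to the $n=2k+1$ interlacing relations $m_{i+s}<m_s+k+1\leqs m_{i+s+1}$; counting the choices with $j$ fixed in the lower block and taking the difference $\s^{(k)}_{i,j}-\s^{(k)}_{i,j+1}$ collapses most of the binomial sums, leaving the product $\frac{1}{(2i-2)!}\prod_{s=1-i}^{i-2}(2j-k+s)$. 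The main obstacle is exactly this last telescoping/bijection bookkeeping in items (5) and (6) — keeping the interlacing inequalities straight when the middle coordinate is free and the top coordinate is pinned to $n$ — rather than anything conceptually deep; everything else is a direct translation through Proposition~\ref{lma:S} and Corollary~\ref{cor:S}.
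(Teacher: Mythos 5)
Your items (2)--(4) follow the paper's route, but already in item (1) you misread the definition: the monomials of $K_i^{(n,k)}$ are the \emph{full} products $x_{m_1}x_{m_2}\cdots x_{m_{2i+1}}$, middle variable included ($K_i$ is homogeneous of degree $2i+1$; under your reading $K_0$ would be a constant and $K_k\neq x_1\cdots x_{2k+1}$). Consequently your asserted equivalence ``$j\in\{m_1,\dots,m_i\}$ is the same as $j\in\{m_1,\dots,m_{i+1}\}$'' is false: tuples with $j=m_{i+1}\leqs k$ and $j\notin\{m_1,\dots,m_i\}$ exist (e.g.\ $(1,2,5)\in\cS{6,2}{1}$ with $j=2$), and with the correct reading they do contribute to $\pp{K_i^{(n,k)}}{x_j}({\bf 1})$; once the definition is fixed, (1) is exactly the paper's argument via Corollary~\ref{cor:S}(2). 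In (4) your surjectivity argument for $\hat\rho$ misses the real point: subtracting $1$ from the last $i$ entries of an \emph{arbitrary} representative $\um'$ can violate the $s=1$ inequality $m_{i+1}<m_1+(n-1)-k$; the paper fixes this by choosing the representative with middle entry $k+1$ (Corollary~\ref{cor:S}(4)), which works precisely because $n>2k+1$.

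The genuine gaps are (5) and (6). Your decomposition of $\cS{n,k}{i,j}$ according to whether $n\in\um$ fails on both halves: \emph{every} inequality in Proposition~\ref{lma:S}(1) involves $n$, not only condition (2), so $\{\um\in\cS{n,k}{i,j}:m_{2i+1}<n\}$ is not contained in $\cS{n-1,k}{i,j}$ (e.g.\ $(1,4,5)\in\cS{6,2}{1}$ but $(1,4,5)\notin\cS{5,2}{1}$), nor conversely ($(1,2,3)\in\cS{3,1}{1}$ but $(1,2,3)\notin\cS{4,1}{1}$); and on the part $m_{2i+1}=n$ the map $\um\mapsto\hum$ is not injective (for $k=1$, $n=4$ both $(1,2,4)$ and $(1,3,4)$ give $(1,4)$) --- exactly the multiplicity you flagged and left unresolved. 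The paper's proof is different: it partitions $\cS{n,k}{i,j}$ by $j=m_{i+1}$ versus $j<m_{i+1}$, identifies the first part with $\cS{n-1,k}{i,j=m_{i+1}}$ via the shift map $\rho$, and compares the second part with $\cS{n-1,k}{i,j<m_{i+1}}$ fiberwise over $\um\mapsto\hum$, where each fiber is an interval of admissible middle entries whose length grows by exactly one when $n-1$ is replaced by $n$; summing over the bijection $\hat\rho_j$ gives (5). Finally, (6) is not proved at all in your sketch: ``the binomial sums collapse'' is precisely the hard content. The paper needs the explicit count $\#\hcS{2k+1,k}{i,j}=\sum(m_2-m_1)(m_3-m_2)\cdots(k+1-m_i)$, two recurrence relations (Lemma~\ref{lem:recur_rel}), the closed form $\s^{(k)}_{i,1}=\binom{k+i-1}{2i-1}$, and then an induction on $k$ with binomial identities to get $\s^{(k)}_{i,j}-\s^{(k)}_{i,j+1}=\binom{2j-k+i-2}{2i-2}$ (resp.\ its mirror image); without some such computation the quantitative statement on which the rank argument for $\Lambda_{k,k}$ rests remains unestablished.
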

\begin{proof}
From the definition (\ref{eq:k_i_itoh_rest}) of $K^{(n,k)}_i$, combined with (\ref{eq:jaco}), we find that
\begin{equation*}
  \cK^{(n,k)}_{i,j}=\#\set{\um\in\cS{n,k}i\mid j\in\set{m_1,m_2,\dots,m_{2i+1}}}\;.
\end{equation*}%
In order to derive (1) from it suffices to use the inequalities $j\leqs k<m_{i+2}$ (see item (2) in Corollary
\ref{cor:S} for the second inequality).  In view of (\ref{eq:integrals_for_jac}) and (\ref{eq:jac_at_1}), the
matrix $\Lambda_{k,k}$ is by definition given by
\begin{equation*}
  \Lambda_{k,k}=\frac{\p(K_1^{(n,k)}-q_1H,\ldots,K_k^{(n,k)}-q_kH)}{\p (x_{1},x_{2},\ldots,x_{k})}({\bf 1})\;,
\end{equation*}
from which item (2) follows. In order to prove item (3), we need to show that when $\um$ satisfies the two
conditions of Proposition (\ref{lma:S}), then $\um'$, as defined in item (3), also verifies them (with $n$ replaced
by $n+1$). In these conditions, every term is augmented by 1, proving their validity, except for condition (1) with
$i=1$, where one has to check that $m_{i+1}<m_1+n-1-k$ implies that $m_{i+1}<m_1+n-k$, but this is trivial. This
proves (3).

Since the map $\um\mapsto\hum$ amounts to removing the middle entry of its argument and since, by definition,
$\hcS{n,k}i$ is the image of this map, $\rho$ induces a map $\hat\rho:\hcS{n-1,k}i\to\hcS{n,k}i$, which is by
construction injective; explicitly it is given by $(m_1,\dots,m_{i},m_{i+2},\dots,m_{2i+1})\mapsto
(m_1,\dots,m_{i},m_{i+2}+1,\dots,m_{2i+1}+1)$. To show that $\hat\rho$ is surjective, choose as representative
$\um'$ for a given $\hum'\in\hcS{n,k}i$ the one for which $m'_{i+1}=k+1$; this yields indeed an element $\um'$ of
$\cS{n,k}i$, according to item (4) in Corollary \ref{cor:S}.  Thanks to this choice, $\um'=\rho(\um)$ with
$\um\in\cS{n-1,k}i$, by the same use of Proposition (\ref{lma:S}) as above: the exceptional case of (1) with $i=1$
now amounts to checking that $m_{i+1}=k+1<m_1+n-1-k$, which is fine since $2k+1<n$. Then $\hat\rho(\hum)=\hum'$, so
that $\hat\rho$ is surjective, hence bijective. For future reference, notice that if one can pick a representative
$\um'$ in $\cS{n-1,k}i$ for $\hum'$ with $m'_{i+1}<k+1$, then this representative is also in the image of
$\rho$. If, in the bijection~$\hat\rho$, $j$ appears as one of the (first $i$) indices of $\um$, the same will be
true for~$\um'$, and vice versa. Therefore, $\hat\rho_j$ is also bijective, for $j=1,\dots,i$. From it and from the
definition (\ref{eq:s_def}) of $\s^{(k)}_{i,j}$, we get
\begin{equation*}
  \s^{(k)}_{i,j}=\#\hcS{2k+1,k}{i,j}=\#\hcS{n,k}{i,j}\;,\quad\hbox{ for all } n\geq2k+1\;.
\end{equation*}%
This proves the different claims in (4). We next prove (5). In view of items (1) and~(4) we need to show that
\begin{equation}\label{eq:card_1}
  \#\cS{n,k}{i,j}=\#\cS{n-1,k}{i,j}+\#\hcS{n-1,k}{i,j}\;.
\end{equation}%
Let us denote for all $n\geq2k+1$ and for $j=1,\dots,k$ by $\cS{n,k}{i,j=m_{i+1}}$ and $\cS{n,k}{i,j<m_{i+1}}$ the subsets of
$\cS{n,k}{i,j}$ consisting of those $\um$ for which $j=m_{i+1}$, respectively for which $j<m_{i+1}$. In view of
item (2) in Corollary \ref{cor:S} these subsets form a partition of $\cS{n,k}{i,j}$. We will show that
\begin{equation}\label{eq:card_2}
  \#\cS{n,k}{i,j=m_{i+1}}=\#\cS{n-1,k}{i,j=m_{i+1}} \quad\mbox{ and }\quad
  \#\cS{n,k}{i,j<m_{i+1}}=\#\cS{n-1,k}{i,j<m_{i+1}}+\#\hcS{n-1,k}{i,j}\;,
\end{equation}%
which proves (\ref{eq:card_1}). First, let us consider the restriction of the injective map $\rho$ to
$\cS{n-1,k}{i,j=m_{i+1}}$. Its image is contained in $\cS{n,k}{i,j=m_{i+1}}$; in fact, its image consists of all of
$\cS{n,k}{i,j=m_{i+1}}$ since, as we pointed out in the proof of item (4), any element $\um'\in \cS{n,k}i$,
with $m'_{i+1}\leqs k$ belongs to the image of $\rho$. This proves the first equality in (\ref{eq:card_2}). For the
second equality, consider the following diagram:
$$
\xymatrix{
\cS{n-1,k}{i,j<m_{i+1}}\ar[rr]^-{\rho}\ar[d]_-{\tau_{n-1}}& &\cS{n,k}{i,j<m_{i+1}}\ar[d]^-{\tau_{n}}\\
\hcS{n-1,k}{i,j}\ar[rr]^-{\hat\rho_j}_-{\simeq}& &\hcS{n,k}{i,j}}
$$
The maps $\tau_n$ in it are defined by $\um\mapsto\hum$. Clearly this diagram is commutative. The lower line is a
bijection in view of (4) and the upper line is injective. We claim that for every element $\hum$ of
$\hcS{n-1,k}{i,j}$,
\begin{equation}\label{eq:diff}
  1+\#\tau_{n-1}^{-1}\set{\hum}=\#\tau_{n}^{-1}\set{\hat\rho_j(\hum)}\;.
\end{equation}%
Indeed, according to Corollary \ref{cor:S} (3) and (4), given $\hum\in\hcS{n-1,k}{i,j}$ the $\um$ such that
$\tau_{n-1}(\um)=\hum$ are precisely those for which $m_{i+1}$ satisfies the inequalities $m_{i+1}<m_1+n-1-k$ and
$m_{2i+1}<m_{i+1}+n-1-k$, so there are $2(n-1-k)+m_1-m_{2i+1}-1$ of them. Therefore
\begin{eqnarray*}
  &\#\tau^{-1}_{n-1}\set{\hum}=2(n-1-k)+m_1-m_{2i+1}-1=2(n-k)+m_1-m_{2i+1}-3\;,  &\\
  &\#\tau^{-1}_{n}\set{\hat\rho_j(\hum)}=2(n-k)+m_1-m'_{2i+1}-1=2(n-k)+m_1-m_{2i+1}-2\;.&
\end{eqnarray*}
This proves (\ref{eq:diff}). From it, the second equality in (\ref{eq:card_2}) is clear, because $\rho$ is
injective and the maps $\tau_n$ are surjections.

The proof of item (6) will be given at the end of the section.
\end{proof}
We now show our main claim, to wit that the matrix $\Lambda_{k,k}$ is of maximal rank. We do this by analyzing the
structure of this matrix. As before, $k$ and $n$ are fixed and $n>2k+1$. Consider the integers $\s_{i,j}^{(k)}$,
which we view as the entries of a matrix (of size $k\times k$). For fixed $i$, the right hand side of
(\ref{eq:rec_for_sigma}) is zero for $i-1$ consecutive \emph{integer} values of $j$, namely for
$\[\frac{k-i+3}2\]\leqs j\leqs \[\frac{k+i-1}2\]$ and it is positive for all other integer values of $j$, because
for such values of $j$, either all factors are negative or all factors are positive, and the number of factors is
$2i-2$, hence even.  This means that the integers $\s_{i,j}^{(k)}$ verify the following properties: for all
$i\in\{1,2,\ldots,k\}$ and $j\in\{1,2,\ldots,k-1\}$
\begin{equation*}
  \s^{(k)}_{i,j}\geqs \s^{(k)}_{i,j+1} \text{ with equality iff}
  \left[\frac{k-i+3}{2}\right]\leq j< \left[\frac{k+i+1}{2}\right].
\end{equation*}
The entries of $\cK^{(n,k)}$ enjoy the same property: in view of items (4) and (5) of Proposition
(\ref{prp:indep_tech}), $\cK^{(n,k)}_{i,j}-\cK^{(2k+1,k)}_{i,j}=(n-2k-1)\s^{(k)}_{i,j}$; also
$\cK^{(2k+1,k)}_{i,j}$ is independent of $j$ (see example \ref{example_bogo_case_S_i,j_same_cardinality}),
so that
\begin{equation*}
  \cK^{(n,k)}_{i,j}\geqs \cK^{(n,k)}_{i,j+1} \text{ with equality iff}
  \left[\frac{k-i+3}{2}\right]\leq j< \left[\frac{k+i+1}{2}\right].
\end{equation*}
In fact, according to item (2) of Proposition (\ref{prp:indep_tech}), the entries of $\Lambda=\Lambda_{k,k}$ also
enjoy the same property, since the cited item says that the lines of $\cK^{(n,k)}$ and $\Lambda$ are the same, up
to an additive constant. So the matrix $\Lambda$ has the following structure:
\begin{equation*}
  \Lambda_{i,j}\geqs \Lambda_{i,j+1} \text{ with equality iff}
  \left[\frac{k-i+3}{2}\right]\leq j< \left[\frac{k+i+1}{2}\right].
\end{equation*}
It follows that $\Lambda$ is of maximal rank. Indeed, the above property says that the entries of the $i$-th row
are decreasing in $j$ with exactly $i$ elements (in the middle) equal. Furthermore if the elements
$\Lambda_{i,j_1},\ldots,\Lambda_{i,j_i}$ of the $i$-th row are also equal, then the elements
$\Lambda_{i+1,j_1},\ldots,\Lambda_{i+1,j_i}$, of the $i+1$-th row are equal.  Subtracting a suitable multiple of
the last line (whose elements are all equal, and different from zero) from line $i$ we can make the $i$ equal
elements of line $i$ zero. Doing this for $i=1,\dots,k-1$ and rearranging the columns, we obtain a lower triangular
matrix with non-zero diagonal elements. This shows that $\Lambda$ is of maximal rank, and hence terminates ---
modulo the proof of item (6) in Proposition \ref{prp:indep_tech} --- the proof of Proposition
\ref{prp:independence}.

In order to prove item (6) of Proposition \ref{prp:indep_tech} we need two recurrence relations for
$\s^{(k)}_{i,j}$ which we prove in the next lemma.
\begin{lemma}\label{lem:recur_rel}
Let $k\geq i\geq 2$ and $k\geq j\geq 2$. Then
\begin{enumerate}
\item
$
\s^{(k)}_{i,1}=\s^{(k-1)}_{i-1,1}+2\s^{(k-2)}_{i-1,1}+\ldots+(k-i+1)\s^{(i-1)}_{i-1,1}
$
\item
$
\s^{(k)}_{i,j}=\s^{(k-1)}_{i,j-1}+\s^{(k-1)}_{i-1,j-1}+\s^{(k-2)}_{i-1,j-2}+\ldots+\s^{(k-j+1)}_{i-1,1}.
$
\end{enumerate}
\end{lemma}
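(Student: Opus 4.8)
The plan is to reduce $\s^{(k)}_{i,j}$ to a purely combinatorial count and then derive both recurrences by ``peeling off'' the block of indices that contains the smallest one. By definition $\s^{(k)}_{i,j}=\#\hcS{2k+1,k}{i,j}$, where an element of $\hcS{2k+1,k}{i,j}$ is the middle‑entry deletion of some $\um\in\cS{2k+1,k}{i}$ with $j\in\set{m_1,\dots,m_i}$. I first claim that deleting the middle entry is a bijection from $\set{\um\in\cS{2k+1,k}{i}\mid m_{i+1}=k+1}$ onto $\hcS{2k+1,k}{i}$, with inverse the insertion of $k+1$ as the $(i+1)$‑st entry. Indeed, if $\hum=(m_1,\dots,m_i,m_{i+2},\dots,m_{2i+1})$ is the deletion of $\um^0\in\cS{2k+1,k}{i}$, then $m_i\leqs k$ and $m_{i+2}\geqs k+2$ by items (1) and (2) of Corollary \ref{cor:S}, so $k+1$ lies strictly between $m_i$ and $m_{i+2}$; moreover inserting $k+1$ amounts to replacing the middle entry of $\um^0$ by $k+1$, which by item (4) of Corollary \ref{cor:S} (applied with $m'_{i+1}=k+1$, the only value allowed when $n=2k+1$) lies in $\cS{2k+1,k}{i}$. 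The two operations are mutually inverse, so $\s^{(k)}_{i,j}$ is the number of $\um=(m_1,\dots,m_i,k+1,m_{i+2},\dots,m_{2i+1})\in\cS{2k+1,k}{i}$ with $j\in\set{m_1,\dots,m_i}$.

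Next, substitute $a_s:=m_s$ for $s=1,\dots,i$ and $c_s:=m_{i+1+s}-(k+1)$ for $s=1,\dots,i$. Writing out the two conditions of Proposition \ref{lma:S} with $n=2k+1$ (so $n-k=k+1$) and $m_{i+1}=k+1$, a direct calculation shows that they are equivalent to $c_s\geqs a_s$ for $s=1,\dots,i$ and $c_{s-1}<a_s$ for $s=2,\dots,i$, the range constraints collapsing to $a_1\geqs 1$ and $c_i\leqs k$. Together these are precisely
$$1\leqs a_1\leqs c_1<a_2\leqs c_2<\dots<a_i\leqs c_i\leqs k\;.$$
Consequently $\s^{(k)}_{i,j}$ counts the ways of choosing $i$ pairwise disjoint nonempty intervals $[a_1,c_1],\dots,[a_i,c_i]$ of $\set{1,\dots,k}$, listed from left to right, with $j$ equal to the left endpoint of one of them (this forces $j=a_t$ for a unique $t$, since $a_1<\dots<a_i$). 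In any such configuration the element $1$ is covered by an interval if and only if $a_1=1$, and both recurrences come from splitting the count on this dichotomy and then translating.

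For (2), assume $2\leqs i,j\leqs k$. If $a_1\geqs 2$, the whole configuration lies in $\set{2,\dots,k}$, and (as $j\geqs 2$) translating by $-1$ gives a bijection onto the configurations inside $\set{1,\dots,k-1}$ with distinguished left endpoint $j-1$; this contributes $\s^{(k-1)}_{i,j-1}$. If $a_1=1$, then $j$ is a left endpoint of one of $[a_2,c_2],\dots,[a_i,c_i]$, which are $i-1$ disjoint intervals inside $\set{c_1+1,\dots,k}$ with $c_1\leqs a_2-1\leqs j-1$; for each fixed $c_1\in\set{1,\dots,j-1}$ translating by $-c_1$ gives a bijection onto the configurations of $i-1$ intervals inside $\set{1,\dots,k-c_1}$ with distinguished left endpoint $j-c_1$, so this case contributes $\sum_{c_1=1}^{j-1}\s^{(k-c_1)}_{i-1,j-c_1}=\s^{(k-1)}_{i-1,j-1}+\s^{(k-2)}_{i-1,j-2}+\dots+\s^{(k-j+1)}_{i-1,1}$. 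Adding the two cases yields relation (2).

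For (1), assume $2\leqs i\leqs k$; here $j=1$ forces $a_1=1$, so $[a_1,c_1]=[1,c_1]$ and $[a_2,c_2],\dots,[a_i,c_i]$ are $i-1$ disjoint intervals inside $\set{c_1+1,\dots,k}$ with no endpoint condition, which requires $1\leqs c_1\leqs k-i+1$. Writing $T_{i-1}(N)$ for the number of ways to place $i-1$ disjoint nonempty intervals inside $N$ consecutive integers, the same ``is $1$ covered?'' dichotomy gives $T_{i-1}(N)=\s^{(N)}_{i-1,1}+T_{i-1}(N-1)$ for $N\geqs i-1$ with $T_{i-1}(i-2)=0$, hence $T_{i-1}(N)=\sum_{M=i-1}^{N}\s^{(M)}_{i-1,1}$. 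Therefore
$$\s^{(k)}_{i,1}=\sum_{c_1=1}^{k-i+1}T_{i-1}(k-c_1)=\sum_{c_1=1}^{k-i+1}\ \sum_{M=i-1}^{k-c_1}\s^{(M)}_{i-1,1}=\sum_{M=i-1}^{k-1}(k-M)\,\s^{(M)}_{i-1,1}=\sum_{l=1}^{k-i+1}l\,\s^{(k-l)}_{i-1,1}\;,$$
which is relation (1). The only genuinely delicate point in all of this is the index bookkeeping passing from Proposition \ref{lma:S} to the chain inequalities above; once the interval model is in place, both recurrences follow from the elementary translation bijections, and this is the step I would expect to be the main obstacle.
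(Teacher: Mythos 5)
Your proof is correct and follows essentially the same route as the paper: your interval-chain model for $\s^{(k)}_{i,j}$ (obtained by normalizing the middle entry to $k+1$ and reading the two halves of $\um$ as left and right endpoints) is exactly the paper's closed formula $\s^{(k)}_{i,j}=\sum (m_2-m_1)(m_3-m_2)\cdots(k+1-m_i)$, the product being the number of choices of your $c_s$, and both recurrences are then obtained, as in the paper, by splitting on whether the smallest index is $1$ and shifting. The only differences are bookkeeping: in the $a_1=1$ case you condition on $c_1$ (and, for item (1), introduce the auxiliary count $T_{i-1}$), whereas the paper conditions on $m_2$ (resp.\ on $m_i$) and resums the weighted formula -- the two organizations are equivalent.
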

\begin{proof}
Using item (1) of Proposition \ref{lma:S} we deduce that for any
$1\leqs m_1'< m_2'<\ldots<m_i'\leqs k$ such that $j\in\{m_1',\ldots,m_i'\}$ and any $n\geq 2k+1$ we have
$$
\#\{\hum\in\hcS{n,k}{i,j}:m_\ell=m_\ell'\text{ for }\ell=1,2,\ldots,i\}=(m_2'-m_1')(m_3'-m_2')\cdots(k+1-m_i').
$$
Therefore
\begin{equation}
\label{eqt:formula_for_the_sets_hcS}
\#\hcS{n,k}{i,j}=\sum_{\substack{1\leqs m_1<m_2<\ldots<m_i\leq k\\j\in\{m_1,m_2,\ldots,m_i\}}}
(m_2-m_1)(m_3-m_2)\cdots(k+1-m_i).
\end{equation}
From this formula we see that $\#\hcS{n,k}{i,j}$ is independent of $n$, which also follows from item (4) of
Proposition \ref{prp:indep_tech}. Therefore the choice $n=2k+1$ is reasonable.
Since for $\hum\in\hcS{2k+1,k}{i,1}$ we have that $m_1=1$, it follows
$$
\#\{\um\in\hcS{2k+1,k}{i,1}:m_i=\ell\}=(k+1-\ell)\s^{(\ell-1)}_{i-1,1}
$$
for all $i\leqs\ell\leqs k$.
Partitioning the set $\hcS{2k+1,k}{i,1}$ as
$$
\hcS{2k+1,k}{i,1}=\cup_{\ell=i}^k\{\um\in\hcS{2k+1,k}{i,1}:m_i=\ell\}
$$
we get the proof of item (1).

For the proof of item (2) first note that there is a correspondence
between the sets
$$
\{(m_1,m_2,\ldots,m_i):2\leqs m_1<m_2<\ldots<m_i\leq k\text{ and }j\in\{m_1,m_2,\ldots,m_i\}\}
$$
and
$$
\{(m_1,m_2,\ldots,m_i):1\leqs m_1<m_2<\ldots<m_i\leq k-1\text{ and }j-1\in\{m_1,m_2,\ldots,m_i\}\}.
$$
The correspondence is given by the function
$$
(m_1,m_2,\ldots,m_i)\mapsto(m_1-1,m_2-1,\ldots,m_i-1).
$$
Using the formula (\ref{eqt:formula_for_the_sets_hcS}) we get that
$\#\{\hum\in\hcS{2k+1,k}{i,j}:m_1\neq1\}=\s^{(k-1)}_{i,j-1}.$
Now we analyze the case $m_1=1$. For any $\ell\in\{2,3,\ldots,j\}$, formula (\ref{eqt:formula_for_the_sets_hcS})
gives
\begin{gather*}
\#\{\um\in\hcS{2k+1,k}{i,j}:m_1=1,m_2=\ell\}=\\
(\ell-1)\sum_{\substack{\ell<m_3<\ldots<m_i\leq k\\j\in\{\ell,m_3,\ldots,m_i\}}}
(m_3-\ell)(m_4-m_3)\cdots(k+1-m_i)
\end{gather*}
and therefore
\begin{gather}
\label{eqt:first_recur_formula_for_the_sets_hcS}
\begin{split}
\#\{\um\in\hcS{2k+1,k}{i,j}:m_1=1\}=&\\
\sum_{\ell=2}^j(\ell-1)\sum_{\substack{\ell<m_3<\ldots<m_i\leq k\\j\in\{\ell,m_3,\ldots,m_i\}}}
(m_3-\ell)(m_4-m_3)\cdots&(k+1-m_i).
\end{split}
\end{gather}
This is because $m_2$ can only take the values $2,3,\ldots,j$.
In a similar manner as in the case $m_1\neq1$, for $2\leqs j'\leqs j$
we have
\begin{gather*}
\sum_{\ell=j'}^j\sum_{\substack{\ell<m_3<\ldots<m_i\leq k\\j\in\{\ell,m_3,\ldots,m_i\}}}
(m_3-\ell)(m_4-m_3)\cdots(k+1-m_i)=\\
\sum_{\ell=1}^{j-j'+1}\sum_{\substack{\ell<m_3<\ldots<m_i\leq k-j'+1\\j-j'+1\in\{\ell,m_3,\ldots,m_i\}}}
(m_3-\ell)(m_4-m_3)\cdots(k-j'+2-m_i)
\end{gather*}
which is exactly $\s^{(k-j'+1)}_{i-1,j-j'+1}$. This is because the first entry of any vector in
$\hcS{2k-2j'+3,k-j'+1}{i-1,j-j'+1}$ can only take the values $m_1=1,2,\ldots,j-j'+1$.
Combining with formula (\ref{eqt:first_recur_formula_for_the_sets_hcS}) and the case
$m_1\neq 1$, we get item (2).

\end{proof}
Before giving the general proof of item (6) of Proposition \ref{prp:indep_tech}.
we will prove it for the special cases $i=1$ and $i=k$. We do this in the following example.
\begin{example}
\label{example_casei=1_and_i=k}
For the case $i=1$ we easily get (see for example (\ref{eqt:formula_for_the_sets_hcS}) or Proposition \ref{lma:S})
$$
\s^{(k)}_{1,j}=k-j+1, \; j=1,2,\ldots,k.
$$
Therefore $\s^{(k)}_{1,j}-\s^{(k)}_{1,j+1}=1$.
For the case $i=k$, we already pointed out that the set $\hcS{2k+1,k}{k}$ has exactly one element,
namely $\hcS{2k+1,k}{k}=\{(1,2,\ldots,k,k+2,k+3,\ldots,2k+1)\}$.
It follows that $\s^{(k)}_{k,j}=1, \; j=1,2,\ldots,k$,
and the sequence $\s^{(k)}_{k,j}$ is constant.
\end{example}
\begin{proof}[Proof of item (6) of Proposition \ref{prp:indep_tech}]
First we prove (using induction on $k$) that
\begin{equation}
\s^{(k)}_{i,1}=\binom{k+i-1}{2i-1}, \text{ for all } 1\leq i\leq k.
\end{equation}
For $i=1$ this formula says that $\s^{(k)}_{1,1}=k$, which is included in the Example \ref{example_casei=1_and_i=k}.
Assuming that $\s^{(k')}_{i,1}=\binom{k'+i-1}{2i-1}$ for all $1\leqs k'<k$ and all $i\leq k'$,
then using the recurrence relation of item (1) of Lemma \ref{lem:recur_rel} we get
$$
\s^{(k)}_{i,1}=\binom{k+i-3}{2i-3}+2\binom{k+i-4}{2i-3}+\dots+(k-i+1)\binom{2i-3}{2i-3}=\binom{k+i-1}{2i-1}.
$$

For our proof we will also use induction.  For $k=1$ and $k=2$ the proof is in the Example
\ref{example_casei=1_and_i=k}.  We suppose $k>2$ and we consider the case $2j-k+i-2\geq 0$ (the case $2j-k+i-2<0$
being the same).  In this case we will show that $\s^{(k)}_{i,j}-\s^{(k)}_{i,j+1}=\binom{2j-k+i-2}{2i-2}$ (in the
case $2j-k+i-2<0$ we have to show that $\s^{(k)}_{i,j}-\s^{(k)}_{i,j+1}=\binom{-2j+k+i-1}{2i-2}$).  Assuming the
truth of this formula for $k'<k$, then using the recurrence relation of item (2) of Lemma \ref{lem:recur_rel} we
get
\begin{gather*}
\s^{(k)}_{i,j}-\s^{(k)}_{i,j+1}=\s^{(k-1)}_{i,j-1}-\s^{(k-1)}_{i,j}+
\sum_{\ell=1}^{j-1}(\s^{(k-j+\ell)}_{i-1,\ell}-\s^{(k-j+\ell)}_{i-1,\ell+1})-\s^{(k-j)}_{i-1,1}=\\
\binom{2j-k+i-3}{2i-2}+\sum_{\ell=1}^{j-1}\binom{\ell-k+j+i-3}{2i-4}-\binom{k-j+i-2}{2i-3}=\\
\binom{2j-k+i-3}{2i-2}+\binom{2j-k+i-3}{2i-3}-\binom{-k+j+i-2}{2i-3}-\binom{k-j+i-2}{2i-3}=\\
\binom{2j-k+i-2}{2i-2}
\end{gather*}
which follows from the formulas
$$
\binom{2j-k+i-3}{2i-2}+\binom{2j-k+i-3}{2i-3}=\binom{2j-k+i-2}{2i-2}
$$
and
$$
\binom{-k+j+i-2}{2i-3}=\binom{k-j+i-2}{2i-3}.
$$
\end{proof}

\section{Non-commutative and Liouville integrability}\label{sec:integ}
In this section, we use the results of the previous section to prove our main result, Theorem \ref{thm:main}, which
states that the Lotka-Volterra systems $\LV(n,k)$ (with $n>2k+1$) are Liouville integrable as well as
non-commu\-tative integrable (of rank $k+1$). First, let us recall the following definition.
\begin{defn}\label{def:non-com}
Let $(M,\Pi)$ be a Poisson manifold of dimension $n$. Let $\F=(f_1,\dots,f_s)$ be an $s$-tuple of functions on $M$,
where $2s\geqs n$ and set $r:=n-s$. Suppose the following:
\begin{enumerate}
  \item[(1)] The functions $f_1,\dots,f_r$ are in involution with the functions $f_1,\dots,f_s$:
  $$ \{f_i,f_j\}=0,\qquad (1\leqs i\leqs r \hbox{ and } 1\leqs j \leqs s)\;;$$
  \item[(2)] For $m$ in a dense open subset of $M$:
  $$ \diff_m f_1\wedge \dots \wedge \diff_m f_s\ne 0 \quad \hbox{and} \quad \X_{f_1}|_m\wedge \dots \wedge
    \X_{f_r}|_m\ne 0\;.  $$
\end{enumerate}
Then the triplet $(M,\Pi,\F)$ is called a \emph{non-commutative integrable system} of \emph{rank} $r$.
\end{defn}
The classical case of a \emph{Liouville integrable system} corresponds to the particular case where $r$ is half the
(maximal) rank of $\Pi$; this implies that \emph{all} the functions $f_1,\dots,f_s$ are pairwise in involution. The
case of a superintegrable system corresponds to $r=1$; in the latter case, the Poisson structure does not play any
role.

We first consider the non-commutative integrability (of rank $k+1$) of $\LV(n,k)$ (with $n>2k+1$). The $n-k-1$
first integrals which we consider are $H=K_0^{(n,k)},K_1^{(n,k)},\dots,K_k^{(n,k)}$ (see Subsection
\ref{subsec:polynomial_integrals}) and $H^{(n,k)}_1,H^{(n,k)}_2,\dots,H^{(n,k)}_{n-2k-2}$ (see Subsection
\ref{subsec:rational_integrals}).  We know already from the previous sections that all functions are first
integrals of $\LV(n,k)$. Notice that when $n$ is odd, $H^{(n,k)}_1$ is just the Casimir function $C$ (see
(\ref{eq:Casimir})). It was shown by Itoh (see \cite{itoh2}) that the functions $K_i$ are in involution, hence the
functions $K_i^{(n,k)}$ are also in involution, being restrictions to a Poisson submanifold. We show in the
following proposition that the functions $K_i^{(n,k)}$ are in involution with the functions $H^{(n,k)}_\ell$.

\begin{prop}
  For $\ell=1,\dots,n-2k-2$ and for $i=1,\dots,k$ the functions $K_i^{(n,k)}$ and $H^{(n,k)}_\ell$ are in
  involution.
\end{prop}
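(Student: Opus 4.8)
The plan is to compute $\pb{K_i^{(n,k)},H^{(n,k)}_\ell}_k$ directly and show that it collapses to an antisymmetric sum which vanishes identically. A preliminary step removes the $\psi$-symmetry: the involution $\psi$ of (\ref{eq:psi_def}) is anti-Poisson for $\pi_k$, it fixes each $K_i^{(n,k)}$ (because the set $\cS{n,k}{i}$ is invariant under the index reversal $\um\mapsto(n+1-m_{2i+1},\dots,n+1-m_1)$, as one checks directly from the inequalities of Proposition \ref{lma:S}) and it permutes the functions $H^{(n,k)}_1,\dots,H^{(n,k)}_{n-2k-2}$ among themselves (since $\phi_k\circ\psi=\psi\circ\phi_k$ and $\psi^*F_\ell=G_\ell$). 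So it is enough to treat the indices $\ell$ with $1\leqs\ell\leqs r-1$, for which $H^{(n,k)}_\ell=\hat H^{(n,k)}_\ell\,(x_{k+1}+\dots+x_{k+2\ell-1})$ has the explicit shape (\ref{eq:H_odd}); as elsewhere in the paper I would write out only the case $n$ odd. Abbreviate $\hat H:=\hat H^{(n,k)}_\ell$ and $J:=\set{k+1,\dots,k+2\ell-1}$, so that $H^{(n,k)}_\ell=\hat H\sum_{j\in J}x_j$ and, using $\ell\leqs r-1$, $J\subset\set{k+1,\dots,n-k-2}$; note also that $\set{1,\dots,n}$ is the disjoint union of $J$ and the set of variables occurring in $\hat H$.

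Next I would record three facts. First, if $x_s$ is one of the variables occurring in $\hat H$, then $\pb{x_s,H^{(n,k)}_\ell}_k=0$: indeed $x_s$ then occurs in $x_j\hat H$ for every $j\in J$, so $\pb{x_s,x_j\hat H}_k=0$ by Lemma \ref{lma:xs}(1), and summing over $j\in J$ gives the claim. Second, if $s\in J$, then $\pb{x_s,\hat H}_k=0$ by Lemma \ref{lma:xs}(2), so by the Leibniz rule $\pb{x_s,H^{(n,k)}_\ell}_k=\hat H\,\pb{x_s,\sum_{j\in J}x_j}_k=\hat H\,x_s\bigl(\sum_{j\in J,\,j>s}x_j-\sum_{j\in J,\,j<s}x_j\bigr)$, where I have used that $(A_k)_{s,j}=1$ for $s<j$ and $(A_k)_{s,j}=-1$ for $s>j$ when $s,j\in J$, a direct consequence of (\ref{eq:A_k_def_2}) together with $J\subset\set{k+1,\dots,n-k-1}$. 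Third, for $\um=(m_1,\dots,m_{2i+1})\in\cS{n,k}{i}$ the only index of $\um$ that can lie in $J$ is the middle one $m_{i+1}$, since $m_1<\dots<m_i\leqs k$ and $n-k<m_{i+2}<\dots<m_{2i+1}$ by Corollary \ref{cor:S}(1)--(2), while every element of $J$ lies strictly between $k$ and $n-k$.

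Combining these, and expanding $\pb{K_i^{(n,k)},H^{(n,k)}_\ell}_k=\sum_{\um\in\cS{n,k}{i}}\sum_{t=1}^{2i+1}\bigl(\prod_{t'\ne t}x_{m_{t'}}\bigr)\pb{x_{m_t},H^{(n,k)}_\ell}_k$, the first fact kills all terms in which $m_t$ is a variable of $\hat H$, the third fact forces $t=i+1$ in the remaining ones, and the second fact then yields
\begin{equation*}
  \pb{K_i^{(n,k)},H^{(n,k)}_\ell}_k=\hat H\sum_{\substack{\um\in\cS{n,k}{i}\\ m_{i+1}\in J}}\Bigl(\prod_{t=1}^{2i+1}x_{m_t}\Bigr)\Bigl(\sum_{j\in J,\,j>m_{i+1}}x_j-\sum_{j\in J,\,j<m_{i+1}}x_j\Bigr)\;.
\end{equation*}
The main obstacle — and the one genuinely delicate point — is the vanishing of this last sum. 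By Corollary \ref{cor:S}(4), once $\um\in\cS{n,k}{i}$ has $m_{i+1}\in J$, replacing $m_{i+1}$ by \emph{any} element of $J$ again produces an element of $\cS{n,k}{i}$; hence the tuples occurring in the sum are exactly those obtained from a fixed ``outer'' tuple $(m_1,\dots,m_i,m_{i+2},\dots,m_{2i+1})$ by inserting an arbitrary $j\in J$ in the middle slot. For each such outer tuple the corresponding contribution is $\bigl(\prod_{t\ne i+1}x_{m_t}\bigr)\sum_{j\in J}x_j\bigl(\sum_{j'\in J,\,j'>j}x_{j'}-\sum_{j'\in J,\,j'<j}x_{j'}\bigr)$, whose inner double sum equals $\sum_{j<j',\ j,j'\in J}(x_jx_{j'}-x_{j'}x_j)=0$. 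Therefore $\pb{K_i^{(n,k)},H^{(n,k)}_\ell}_k=0$, as claimed. (For $n$ even one replaces $2\ell-1$ by $2\ell$ throughout; everything else is identical.)
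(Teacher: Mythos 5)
Your proof is correct and follows essentially the same route as the paper: reduction to $\ell\leqs r-1$ via the anti-Poisson involution $\psi$, use of Lemma \ref{lma:xs} to kill all brackets involving variables of $\hat H^{(n,k)}_\ell$, Corollary \ref{cor:S} (3)--(4) to see that the middle index of a tuple in $\cS{n,k}{i}$ runs freely over the middle band, and a final cancellation by antisymmetry of a double sum over that band. The only difference is organizational — the paper groups the monomials of $K_i^{(n,k)}$ into blocks $(L+L')Y$ before bracketing, while you expand by Leibniz first and regroup by the outer tuple at the end — which does not change the substance of the argument.
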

\begin{proof}
We give the proof for $n$ odd and we write $\PB$ for $\PB_k^{(n)}$. By using the involution $\psi$, if necessary,
we may assume that $1\leqs \ell\leqs \frac{n-1}2+k$. Suppose that $X$ is a polynomial in $x_1,\dots,x_n$ of the
form $X=(L+L')Y$, where $L$ and $L'$ are linear, $Y$ and $L'$ are independent of the variables
$x_{k+1},\dots,x_{n-k}$ and $L$ is the sum of these variables. We will show that $\pb{H^{(n,k)}_\ell,X}=0$; since
by items (3) and (4) of Corollary \ref{cor:S}, $K_i^{(n,k)}$ is a (finite) sum of terms of this form, it follows
that $\pb{H^{(n,k)}_\ell,K_i^{(n,k)}}=0$, which was to be shown. First, since every variable which appears in $L'$
or $Y$ also appears in each term of $H_\ell^{(n,k)}$ (in fact it appears in $\hat H_\ell^{(n,k)}$), we know by item
(1) in Lemma \ref{lma:xs} that $\pb{H^{(n,k)}_\ell,L'Y}=0$. It remains to be shown that $\pb{H^{(n,k)}_\ell,LY}=0$,
where we recall that $L=x_{k+1}+\dots+x_{n-k}$. Using again item (1) of Lemma \ref{lma:xs} in the two first
equalities that follow, and item (2) of the same lemma in the fourth equality, we get
\begin{eqnarray*}
  \pb{H_\ell^{(n,k)},Y\sum_{s=k+1}^{n-k}x_s}&=&Y\pb{H_\ell^{(n,k)},\sum_{s=k+1}^{n-k}x_s}
      =Y\pb{H_\ell^{(n,k)},\sum_{s=k+1}^{n+2\ell-1}x_s}\\
      &=&Y\sum_{j,s=k+1}^{n+2\ell-1}\pb{x_j\hat H_\ell^{(n,k)},x_s}
      =Y\hat H_\ell^{(n,k)}\sum_{j,s=k+1}^{n+2\ell-1}\pb{x_j,x_s}\;.
\end{eqnarray*}
The latter sum is zero because of skew-symmetry of the Poisson bracket.
\end{proof}
To finish the proof of non-commutative integrability, it remains to be shown that the Hamiltonian vector fields,
associated to the $k+1$ first integrals $K_0^{(n,k)},\dots,K_k^{(n,k)}$ are independent on an open dense subset on
$\bbR^n$. When $n$ is even, the Poisson structure is symplectic, and so this follows from the functional
independence of $K_0^{(n,k)},\dots,K_k^{(n,k)}$. When $n$ is odd, the Poisson structure is of rank $n-1$ and a
Casimir is given by the rational function $H^{(n,k)}_1$, and so the functional independence of
$H^{(n,k)}_1,K_0^{(n,k)},K_1^{(n,k)}\dots,K_k^{(n,k)}$ leads to the same conclusion.

Let us now consider Liouville integrability. We know from \cite{KKQTV} that the functions $F_1,\dots,F_{r-1}$ are
in involution, with $r:=\left[\frac{n+1}2\right]-k$. According to Proposition \ref{prop:poisson_map}, $\phi_k$ is a
Poisson map, and so the pullbacks $H^{(n,k)}_1,\dots,H^{(n,k)}_{r-1}$ are also pairwise in involution. The upshot
is that the $\left[\frac{n+1}2\right]$ independent functions $H^{(n,k)}_1,\dots,H^{(n,k)}_{r-1},$
$K_0^{(n,k)},\dots,K_k^{(n,k)}$ are in involution. Since $\pi_k^{(n)}$ is of rank $n$ when $n$ is even and of rank
$n-1$ when $n$ is odd, this proves Liouville integrability. Notice that, rather than using the functions
$F_1,\dots,F_{r-1}$, one can also use the functions $G_1,\dots,G_{r-1}$, because they are also in involution (since
$\psi$ is an anti-Poisson map).

This finishes the proof of Theorem \ref{thm:main}.

\bibliographystyle{abbrv}


\end{document}